\documentclass[journal]{IEEEtran}
\def\BibTeX{{\rm B\kern-.05em{\sc i\kern-.025em b}\kern-.08em
    T\kern-.1667em\lower.7ex\hbox{E}\kern-.125emX}}


%


\usepackage[yyyymmdd,hhmmss]{datetime}
\usepackage{amssymb,bm,bbm}
\usepackage{eurosym}
\usepackage{euscript}
\newcommand*\diff{\mathop{}\!\mathrm{d}}
\newcommand{\equaref}[1]{(\ref{eq:#1})}

\usepackage[normalem]{ulem}

\usepackage{url}
\usepackage{xcolor}

%
\usepackage{amsmath}

\usepackage{amsthm}
\theoremstyle{plain}
\newtheorem{thm}{Theorem}[section]
\newtheorem{lem}[thm]{Lemma}

\newtheorem{dfn}[thm]{Definition}
\newtheorem{cor}{Corollary}

\theoremstyle{definition}

\theoremstyle{remark}
\newtheorem{rem}{Remark}

\usepackage{algpseudocode}

\newcommand{\G}{\mathcal G}

\newcommand{\x}{\mathbf x}

\newcommand{\vr}{\mathbf r}

\newcommand{\bigO}{\mathcal O}
\newcommand{\sX}{\mathcal X}

\newcommand{\sC}{\mathcal C}
\newcommand{\sS}{\mathcal S}
\newcommand{\sT}{\mathcal T}
\newcommand{\sU}{\mathcal U}
\newcommand{\sB}{\mathcal B}
\newcommand{\sA}{\mathcal A}

\newcommand{\expC}{\EuScript C}

\newcommand{\greedy}{\textproc{Greedy}}
\newcommand{\rndlru}{\textproc{RND-LRU}}
\newcommand{\simlru}{\textproc{SIM-LRU}}
\newcommand{\lru}{\textproc{LRU}}
\newcommand{\lfu}{\textproc{LFU}}
\newcommand{\sa}{\textproc{OSA}}
\newcommand{\qlru}{\textproc{$q$LRU}}

\newcommand{\qlrud}{\textproc{$q$LRU-$\Delta C$}}
\newcommand{\duel}{\textproc{Duel}}

\newcommand{\random}{\textproc{Random}}

\DeclareMathOperator*{\argmin}{arg\,min}

\usepackage{mathtools}
\DeclarePairedDelimiter\parens{[}{]}
\DeclareMathOperator{\EXop}{\mathbb{E}}
\newcommand\EX[1]{\EXop\parens*{#1}}

\DeclarePairedDelimiter\norm{\lVert}{\rVert}  

\usepackage{amsmath}

\usepackage{mleftright}
\usepackage{amssymb}

\hyphenation{op-tical net-works semi-conduc-tor}



\usepackage{epstopdf}
\usepackage{verbatim}
\usepackage{enumitem}
\usepackage{lineno}

\begin{document}

%
\title{Similarity Caching: Theory and Algorithms}


\date{}
%
\author{\IEEEauthorblockN{Giovanni Neglia\IEEEauthorrefmark{1},
Michele Garetto\IEEEauthorrefmark{2},
Emilio Leonardi\IEEEauthorrefmark{3} \vspace{3mm}\\
\IEEEauthorblockA{\IEEEauthorrefmark{1}Inria, Universit\'e C\^ote d'Azur, France, giovanni.neglia@inria.fr}\\
\IEEEauthorblockA{\IEEEauthorrefmark{2}Universit\`{a} di Torino, Italy, michele.garetto@unito.it}\\
\IEEEauthorblockA{\IEEEauthorrefmark{3}Politecnico di Torino, Italy, emilio.leonardi@polito.it}
}}

\maketitle

\begin{abstract}
This paper focuses on  similarity caching systems, in which  a user request  for an {object~$o$} that is not in the cache can be (partially) satisfied by a similar stored {object~$o'$}, 
at the cost of a loss of user utility.
Similarity caching systems can be effectively  employed  in several application areas, like multimedia retrieval, recommender systems, genome study, and machine learning training/serving.  However, despite their relevance,  the behavior of such systems is far from being well understood.
In this paper, we provide a first comprehensive analysis of similarity caching in the offline, adversarial, and stochastic settings. We show that similarity caching raises significant new challenges, for which we propose the first dynamic policies with some optimality guarantees. We evaluate the performance of our schemes under both synthetic and real request traces.


\end{abstract}

\section{Introduction}
Caching at the network edge plays a key role in reducing user-perceived latency, in-network traffic, and server load. In the most common setting, when a user requests a given \mbox{object $o$}, the cache provides $o$ if locally available (hit), and retrieves it from a remote server (miss) otherwise. In other cases, a user request can be (partially) satisfied by a similar object~$o'$. For example, a request for a high-quality video can still be met by a lower resolution version. In other scenarios, a user query is itself a query for objects similar to a given \mbox{object $o$}. This situation goes under the name of \emph{similarity searching},  {proximity searching}, or also metric searching~\cite{chavez01}. Similarity searching plays an important role in many application areas, like multimedia retrieval~\cite{falchi08}, recommender systems~\cite{pandey09,sermpezis18}, genome study~\cite{auch10}, machine learning training~\cite{weston15,graves14,santoro16}, and serving~\cite{crankshaw15,crankshaw17}. In all these cases, a cache can deliver to the user one or more  objects similar to $o$ among those locally stored, or decide to forward the request to a remote server. The answer provided by the cache is in general an \emph{approximate} one in comparison to the best possible answer the server could provide. Following the seminal papers~\cite{falchi08,pandey09}, we refer to this setting as \emph{similarity caching}, and to the classic one as \emph{exact caching}. 

To the best of our knowledge, the first paper introducing the problem of caching for similarity searching was~\cite{falchi08}. The authors considered how caches can improve the scalability of content-based image retrieval systems. Almost at the same time,~\cite{pandey09} studied caches in content-match systems for contextual advertisement. Both papers propose some simple modifications to the least recently used policy (\lru{}) to account for the possibility of providing approximate answers. 
More recently, in~\cite{weston15} and \cite{graves14}, similarity caching has been used to retrieve similar feature vectors from a memory unit with the goal of  improving the performance of sequence learning tasks. Previous approaches led to the concept of memory-augmented neural networks~\cite{santoro16}.
Clipper~\cite{crankshaw17}---a distributed system to serve machine learning predictions---includes similarity caches to provide low-latency, approximate answers. A  preliminary evaluation of the effect of different caching strategies for this purpose can be found in~\cite{crankshaw15}.
Recently,~\cite{sermpezis18} and a series of papers by the same authors have studied recommendation systems  in a cellular setting,
where contents can be stored close to the users. They focus on how to statically allocate the contents in each cache assuming to know the contents' popularities and the 
utility for a user interested in content $o$ to receive a similar content $o'$. 

Exact caching has been studied for decades in many areas of computer science, 
and there is now a deep understanding of the problem. Optimal caching algorithms are known in specific settings, and general approaches to study caching policies have been developed both under adversarial and stochastic request processes. On the contrary, despite the many potential applications of similarity caching, there is still almost no theoretical study
of the problem, specially for dynamic policies. The only one we are aware of is the competitive analysis of a particular variant of similarity caching in~\cite{chierichetti09} (details in Sect.~\ref{s:adversarial}). Basic questions are still unanswered: are similarity and exact caching fundamentally different problems? Do low-complexity optimal similarity caching policies exist? 
Are there margins of improvement with respect to heuristics proposed in the literature, like in~\cite{falchi08,pandey09}? 
This paper provides the first answers to the above questions.
Our contributions are the following:
\begin{enumerate}
	\item we show that similarity caching gives rise to NP-hard problems even when exact caching lends itself to simple polynomial algorithms;
	\item we provide an optimal pseudo-polynomial algorithm when the sequence of future requests is known;
	\item we recognize that, in the adversarial setting, similarity caching is a $k$-server problem with excursions;
	\item we propose optimal dynamic policies, both when objects' popularities are known, and when they are unknown;
	\item we show by simulation that  our dynamic policies provide better performance than existing schemes, 
                   both under the independent reference model (IRM) and under real request traces. 
\end{enumerate}
A major technical challenge of our analysis is that we allow the object catalog to be potentially infinite and uncountable, as it happens when objects/requests are described by 
vectors of real-valued features~\cite{crankshaw17}. Note that in this case exact caching 
policies like \lru{} would achieve zero hit ratio, when the spatial distribution of 
requests is described by a probability density function.
  
The rest of the paper is organized as follows. Section~\ref{s:model} introduces our main assumptions on request processes and caching policies. Sections~\ref{s:offline} and~\ref{s:adversarial} present results on similarity caching respectively in the offline and in the adversarial setting. Our new dynamic policies are described in Sect.~\ref{s:stochastic}, together with their optimality guarantees in the stochastic setting. We numerically explore the performance of our policies in Sect.~\ref{s:simulations}.

\begin{table}
\textcolor{black}{
\caption{Notation}
\centering
\begin{tabular}{ll}
\hline
$k$ 		& cache size \\    
$\sX$	& set of objects that can be requested\\
$\sS_t$	& state of the cache at time $t$\\
$\norm{.}$	& a norm in $\mathbb R^p$\\
$C_r$ 	&  cost to retrieve an object from the remote server \\
$C_a(x,y)$ 	&  cost to approximate $x$ with $y$\\
$C_a(x,\sS)$ 	&  minimum cost to approximate $x$ with the elements in $\sS$\\
$C_m(\sT,\sS)$		& cost to move from cache state $\sT$ to cache state $\sS$\\
$C(x,\sS)$	& cost to serve request $x$ from cache state $\sS$\\
$C_e(x,y)$	& excursion cost to serve request $x$ using  (server) $y$\\
$\sC_A(\sS_1,\vr_T)$ & time-average cost incurred by caching policy $A$ to serve \\
				& the request sequence $\vr_T$ starting from state $\sS_1$\\
$\expC(\sS)$	& expected cost to serve a request from cache state $\sS$\\
$\lambda_x$	& request rate for object $x$ (finite case) \\
			& or request rate density in $x$ (continuous case)\\	
$\sB(x,V)$		& ball in $\mathbb R^p$ centered in $x$ and with volume $V$\\		
$\sB_r(x)$		& ball in $\mathbb R^p$ centered in $x$ and with radius $r$\\			
\hline
\end{tabular}
\label{t:notation}
}
\end{table}

\section{Main assumptions}
\label{s:model}
Let $\sX$
be the (finite or infinite) set of objects that can be requested by the users.  
We assume that all objects
have equal size and the cache can store up to $k$ objects. The state of the cache at time $t$ is given by the set of objects  $\sS_t$
currently stored in it, $\sS_t=\{ y_1, y_2,\ldots y_k \}$, with $y_i\in \sX$.

We assume that, given any two objects $x$ and $y$ in $\sX$, there is a non-negative  (potentially infinite) cost $C_a(x,y)$ to approximate $x$ with $y$. We consider $C_a(x,x)=0$. 
 Given a set $\sA$ of elements in $\sX$, let $C_a(x,\sA)$ denote the minimum approximation cost provided by elements in $\sA$, i.e., $C_a(x,\sA)=\inf_{y\in \sA} C_a(x,y)$.

In what follows, we consider two main instances for  $\sX$ and $C_a()$.
In the first instance,  $\sX$ is a finite set of objects and thus the approximation cost can be characterized by an $|\sX| \times |\sX|$ matrix of non-negative values.
This case could well describe the (dis)similarity of contents (e.g.~videos) in a finite catalog.
{\color{black} In the second instance, $\sX$ is a compact  subset of $\mathbb R^p$ and $C_a(x,y)=h(\norm{x-y})$, where $h: \mathbb R^+ \to \mathbb R^+$ is a non-decreasing non-negative function and $\lVert  . \rVert$ is a norm in $\mathbb R^p$ (e.g.~the Euclidean one)}. 
This case is more suitable for describing objects characterized by continuous features. \textcolor{black}{For example,   in the field of distance metric learning~\cite{bellet15metric}, supervised machine learning techniques are used to learn how to map similar objects to vectors in $\mathbb R^p$ that are close according to $q$-norm distances, cosine similarity, or Mahalanobis distances. We describe a specific application in Amazon recommendation systems in Sect.~\ref{s:simulations}.}
We will refer to the above two instances as \emph{finite} and \emph{continuous}, respectively.



Our goal is to design effective and efficient cache management policies that minimize the aggregate cost to serve a sequence of requests for objects in $\sX$.
We assume that the function $C_a: \sX \times \sX \to \mathbb{R}^+ \cup \{+\infty\}$ is available for caching decisions and that the cache is able to compute the set of best approximators for $x$, i.e.,~$\arg \min_{y \in \sS_t} C_a(x,y)$. This can be efficiently done using locality sensitive hashing (LSH)~\cite{pandey09}.
Moreover, we will restrict ourselves to online policies in which object insertion into the cache is triggered by requests (i.e., the cache cannot pre-fetch arbitrary objects). 
Upon a request for object $r_t$ at time $t$, if the content is locally stored ($r_t \in \sS_t$), then the cache directly provides  $r_t$ incurring a null cost ($C_a(r_t,\sS_t)=0$) and we have an \emph{exact hit}. 
Otherwise, the cache can either i) provide the best approximating object locally stored, 
i.e.,~$y \in \arg \min_{y \in \sS_t} C_a(x,y)$, 
incurring the approximation cost $C_a(r_t,\sS_t)$ (\emph{approximate hit}) or ii) retrieve 
the content from the server incurring a fixed cost $C_r>0$ (\emph{miss}). Upon a miss, the cache retrieves the object $r_t$, serves it to the user, and then may replace one of the currently stored objects with $r_t$. We stress the caching policy is not required to store $r_t$. 
Without loss of generality, we can restrict to caching policies providing an approximate hit only if the approximation cost is smaller than the retrieval cost ($C_a(r_t, \sS_t) \le C_r$). The caching policy will then always reply to a request $r_t$ with an object $o$ such that $C_a(r_t, o) \le C_r$.
Indeed, 
one could otherwise devise a new caching policy that retrieves content $r_t$ and then discards it, paying a smaller cost.

We also define the movement cost, \textcolor{black}{i.e., the cost of updating the cache state from $\sT$ to $\sS$,} as follows:
\begin{equation}
 C_m(\sT,\sS) \triangleq \begin{cases}
	0, 	 & \textrm{ if }\sS = \sT,\\
	C_r, 	 & \textrm{ if }|\sS \setminus \sT| =1,\\
	+\infty, & \textrm{ otherwise.}
\end{cases}
\label{e:movement_cost}
\end{equation}
\textcolor{black}{The definition captures the fact that it is possible to replace only one object in the cache paying the retrieval cost $C_r$.}
Given a finite sequence of requests $\vr_T={r_1, r_2, \dots, r_T}$ and an initial state $\sS_1$, the average cost paid by a given caching policy $A$ is
\begin{equation}
	\label{e:avg_cost}
	\sC_A(\sS_1,\vr_T)=  \frac{1}{T} \sum_{t=1}^T \left[ C_m(\sS_t,\sS_{t+1}) + C(r_t, \sS_{t+1})\right],
\end{equation}
where 
\begin{equation}
	\label{e:service_cost}
	C(o,\sS) \triangleq \min(C_a(o, \sS),C_r)
\end{equation}
and we call it the service cost. In fact, if $\sS_{t+1}\neq \sS_t$, the cache has retrieved $r_t$ paying the retrieval cost $C_r=C_m(\sS_t,\sS_{t+1})$, but no approximation cost ($C(r_t,\sS_{t+1})=C_a(r_t, \sS_{t+1})=0)$. If $\sS_{t+1}= \sS_t$, the cache has provided an approximate answer or has retrieved (but not stored) $r_t$, paying $C(r_t,\sS_t)\triangleq\min(C_r,C_a(r_t, \sS_{t}))$.
Note that the average cost depends on \textcolor{black}{the caching policy} $A$, because the policy determines the evolution of the cache state $\sS_t$. Policies differ in the choice of which requests are approximate hits 
or misses (even if $C_a(r_t, \sS_t)\le C_r$, the cache can decide to retrieve and store $r_t$) and in the choice of which object is evicted upon insertion of a new one. 
We observe that, if $C_a(x,y)=\infty$ for all $x\neq y$, we recover the  exact caching setting. If, in addition, $C_r=1$, Eq.~\eqref{e:avg_cost} provides the miss ratio.
The  cost structure~\eqref{e:avg_cost}, together with a movement cost like~\eqref{e:movement_cost} that satisfies the triangle inequality, defines a metrical task system (MTS), first introduced by Borodin~{\emph{et al.}}~\cite{borodin92}, which is usually studied through competitive analysis (more in Sect.~\ref{s:adversarial}).


As  mentioned in the introduction, similarity caching lacks a solid theoretical understanding. From an algorithmic view-point, it is not clear if similarity caching is a problem intrinsically more difficult than exact caching.
From a performance evaluation view-point, we do not know if similarity caching can be studied resorting to the same approaches adopted for exact caching. 
In this paper we provide the first answers to these questions, which 
depend crucially on the nature of the requests' sequence. 
Three scenarios are commonly considered in the literature:
\begin{description}
	\item[Offline:] the request sequence is known in advance. This assumption is made when one wants to determine the best possible performance of any policy. In the case of exact caching, it is well known 
          that the minimum cost (miss ratio) is achieved by B\'el\'ady's policy~\cite{belady66}, that evicts at each time the object whose next request is further in the future.
	\item[Adversarial:] the request sequence is selected by an adversary who wants to maximize the cost incurred by a given caching policy. 
           This approach leads to competitive \mbox{analysis}, which determines how much worse an online policy (without knowledge of future requests) performs in comparison to the optimal offline policy. 
	\item[Stochastic:] requests arrive according to a stationary exogenous stochastic process. One example is the classic IRM, where  requests  for different objects are generated by 
	independent  time-homogeneous  Poisson processes. 
            The goal here is to minimize the expected cost or equivalently the average cost in~\eqref{e:avg_cost} over an infinite time horizon.
\end{description}
We separately consider the above three scenarios in the next sections.

\section{Offline optimization} \label{s:offline}
In this section we consider the offline setting in which a finite sequence of requests $\vr_T$ is known in advance. 
We distinguish between two cases: i) a first  preliminary scenario, in which a static set of objects has to be prefetched in the cache before the arrival of the first request (we refer to this scenario as static) and then the state of the cache can not be further modified; ii)  a more general case  in which the state of the cache can be  modified upon every miss  by inserting  in it  
 the requested content, in place of a previously stored content
  (we refer to this scenario as dynamic).

\subsection{Static setting}
We first address the problem of finding a static set of objects to be prefetched in the cache, so as to minimize the cost in~\eqref{e:avg_cost}, i.e., we want to find:
\[\sS^* \in \argmin_{\sS_1} \sum_{t=1}^T C(r_t,\sS_1).\]
Note that the corresponding version of this (static, offline) problem for exact caching has a simple polynomial solution 
with $T \log T$ time complexity and $T$ space complexity: one simply needs to store in the cache the $k$ most requested objects in the trace. 
For similarity caching the problem is much more difficult, in fact:
\begin{thm}
\label{t:np_hard_combinatorial}
The static offline similarity caching problem is NP-hard. 
\end{thm}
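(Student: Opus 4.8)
The plan is to prove NP-hardness by exhibiting a polynomial-time reduction from a known NP-complete problem into a special case of the static offline problem. Since the \emph{finite} instance (with $\sX$ finite and $C_a$ an arbitrary non-negative matrix subject only to $C_a(x,x)=0$) is a particular case of the general problem, hardness of the finite case immediately yields hardness of the general one. I would reduce from \textsc{Dominating Set}: given an undirected graph $G=(V,E)$ and an integer $k$, decide whether there exists $D\subseteq V$ with $|D|\le k$ such that every vertex lies in the closed neighborhood $N[y]=\{y\}\cup\{w:(w,y)\in E\}$ of some $y\in D$.

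From such an instance I would build a similarity caching instance as follows. Take $\sX=V$, and let the request trace $\vr_T$ consist of every vertex requested exactly once, so $T=|V|$. Define the approximation cost by $C_a(u,v)=0$ if $v\in N[u]$ and $C_a(u,v)=C_r+1$ otherwise, and fix $C_r=1$. Note that $C_a(u,u)=0$ holds because $u\in N[u]$, so this is a legal (and, since $G$ is undirected, symmetric) cost matrix. With this choice, for any stored set $\sS$ the service cost of the request for $v$ is $C(v,\sS)=\min(C_r,\min_{y\in\sS}C_a(v,y))$, which equals $0$ when some $y\in\sS$ dominates $v$ (that is, $v\in N[y]$) and equals $1$ otherwise. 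Hence the total cost $\sum_{t}C(r_t,\sS)$ is exactly the number of vertices not dominated by $\sS$.

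The key consequence is that $\min_{|\sS|\le k}\sum_{t}C(r_t,\sS)=0$ if and only if $G$ admits a dominating set of size at most $k$; otherwise the cost counts the strictly positive number of undominated vertices. Because the construction is clearly polynomial and \textsc{Dominating Set} is NP-complete, the decision version of the static offline similarity caching problem (does there exist $\sS$ with $|\sS|\le k$ whose total cost is at most a given bound $B$?) is NP-hard, and therefore so is the optimization problem. I would also remark that the same instance sharpens the contrast with exact caching stated in the text: its exact-caching analogue ($C_a(x,y)=+\infty$ for $x\neq y$) reduces to storing the $k$ most frequent objects, whereas the similarity structure above encodes an arbitrary covering problem.

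The step I expect to demand the most care is not the reduction itself but verifying that restricting attention to the finite $\{0,C_r+1\}$-valued matrix is legitimate and does not trivialize the problem: one must check that storing a requested object never provides a cost-free shortcut that circumvents the covering structure. In the \textsc{Dominating Set} encoding this is automatic, since storing a vertex covers exactly its closed neighborhood, which is precisely the domination relation. This is why I would reduce from \textsc{Dominating Set} rather than from \textsc{Set Cover} or \textsc{Maximum Coverage}, where universe elements and sets play distinct roles and one would need an additional gadget to forbid the cache from storing universe elements directly.
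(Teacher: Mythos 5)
Your proposal is correct and takes essentially the same approach as the paper: both reduce from \textsc{Dominating Set} with the identical gadget ($\sX=V$, each vertex requested exactly once, $C_a(u,v)=0$ iff $v\in N[u]$ and a value exceeding $C_r$ otherwise), so that the total service cost counts exactly the undominated vertices. The only difference is one of framing — the paper routes through an intermediate NP-complete problem (maximize $|N[U]|$ subject to $|U|\le k$) and shows the caching optimization equivalent to it, whereas you reduce the decision problem directly via the ``minimum cost equals zero'' criterion; both are valid polynomial reductions.
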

\begin{proof}
The result follows from a reduction of the dominating set problem.  Consider an undirected graph $\G=(V,E)$ with set of nodes $V$ and set of edges $E$. Given $U\subset V$ we let $N[U]=\{U\}\cup \{u \in V : \exists v\in U, (u,v) \in E \}$ denote the (closed) neighborhood of $U$. The dominating set problem concerns testing, for a given graph $\G$ and value $k$, if there exists a set $U \subset V$ with size at most $k$ such that  $N[U]=V$. The dominating set problem is NP-complete~\cite[Sect.~A1.1]{garey90}. 

{Consider now the following auxiliary decision problem: given a graph $\G$ and two values $k$ and $\ell$, does there exist a set $U \subset V$ with size at most $k$ such that  $|N[U]|\ge \ell$? This problem is NP-complete too, indeed i)  it is clearly  in NP and  ii) the dominating set problem can be reduced to it, by simply placing $\ell = |V|$.}
The corresponding NP-hard optimization problem can be formulated as:
\begin{equation}
\label{e:budgeted_ds}
\underset{U \subset V, |U|\le k}{\text{maximize}} \;\;\; |N[U]|,
\end{equation}
{To conclude our proof,}  we show that this problem can be reduced  to a static offline similarity caching problem.


We consider the static offline similarity caching problem with $\sX = V$, $C_r=1$, and $C_a(x,y)=C_a(y,x)=0$ if $(x,y) \in E$ and $C_a(x,y)=C_a(y,x)=+\infty$ otherwise. The request sequence $\vr_T$ contains one and only one request for each object. 

Let $\sS$ denote the set of objects in the cache and  $U \subset V$ the corresponding set of nodes in the graph $\G$. Consider a request $r$ for object $v\in V$. The request generates a miss and incurs a cost equal to $C_r=1$ if and only if $v$  does not belong to $N[U]$. It follows that the total cost incurred by the similarity cache is equal to the number of nodes in $V\setminus N[U]$.
Therefore finding the cache configuration $\sS$ that 
minimizes the total cost is equivalent to identifying the $k$ nodes $U$ 
with the largest  closed neighborhood $|N[U]|$ as in \eqref{e:budgeted_ds}.
\end{proof}

\begin{rem}
We observe that the static cache allocation problem can also be formulated as a particular maximum coverage problem. The maximum coverage problem cannot be approximated by a constant larger than $1-1/e$,  which is achieved by the greedy algorithm. The relation between the two problems suggests that a greedy algorithm would be a 
reasonable approach to the static offline similarity caching problem, but there is no guaranteed approximation ratio (because of the transformation 
from a maximization problem into a minimization one).  
\end{rem}

In the continuous case, where objects are points in $\mathbb R^p$, and $C_a(x,y)$ is a function of a distance $d()$, one may think that the problem could become simpler. 
The following theorem shows that this is not the case in general.
 \begin{thm} 
\label{t:np_hard_geometric}
Let $\sX= \mathbb R^2$, 
and $C_a(x,y)=h(\norm{x-y})$, where $h(z)=0$ for $z \le 1$ and $h(z)=C_r=1$ otherwise. Finding the optimal static set of objects to store in the cache is NP-hard both for norm-2 and norm-1 distance.
 \end{thm}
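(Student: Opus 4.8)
The plan is to reduce from a known NP-hard geometric covering problem, after first rewriting the caching objective as a pure coverage problem. Because $h$ takes only the values $0$ and $C_r=1$, with the jump located at distance $1$, the service cost of a request $r_t$ from a cache state $\sS$ is $C(r_t,\sS)=0$ exactly when some stored point lies within unit distance of $r_t$ (i.e.\ $r_t$ lies inside the unit ball around some $y\in\sS$) and $C(r_t,\sS)=1$ otherwise. Taking the request sequence $\vr_T$ to consist of $T$ distinct points of $\mathbb R^2$, $\sum_{t=1}^T C(r_t,\sS)$ equals $T$ minus the number of request points covered by the $k$ unit balls centered at the elements of $\sS$. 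Hence solving the static offline problem $\sS^\star\in\argmin_{\sS}\sum_t C(r_t,\sS)$ is exactly the problem of placing $k$ unit balls in $\mathbb R^2$ so as to cover as many of the $T$ given points as possible; in particular, deciding whether cost $0$ is attainable is the question \emph{``can $k$ unit balls cover all $T$ points?''}, i.e.\ whether the minimum unit-ball cover of the point set has cardinality at most $k$.

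It then suffices to prove that this minimum unit-ball cover problem is NP-hard in $\mathbb R^2$ for both the norm-$2$ ball (a disk) and the norm-$1$ ball (a diamond). I would establish this by a reduction from a planar, unit-distance-friendly NP-complete problem (planar $3$-SAT or planar vertex cover), laid out on a grid, following the classical line of Fowler--Paterson--Tanimoto and Megiddo--Supowit on the hardness of covering planar point sets by unit squares/disks. The construction places the request points along grid-aligned ``wire'' gadgets, at a spacing chosen so that a single unit ball can cover at most two consecutive points of a wire, together with variable gadgets (closed chains admitting exactly two minimum covering patterns, read as \emph{true}/\emph{false}) and clause/crossing gadgets, engineered so that a cover of the point set with a prescribed budget $k$ exists if and only if the underlying instance is satisfiable. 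One subtlety specific to the continuous setting is that the ball centers may sit at arbitrary real coordinates; this is handled by the standard observation that any covering ball can be translated until its boundary passes through two of the points it covers without dropping any covered point, so the centers may be restricted to a polynomially sized candidate set. Consequently the continuous problem is no easier than its discrete counterpart and inherits its hardness.

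Finally, both norms must be treated. The norm-$1$ unit ball is an axis-aligned square rotated by $45^\circ$, hence an affine (rotation-plus-scaling) image of the $\ell_\infty$ square used in box-cover reductions, so the norm-$1$ case follows by applying that transformation to the point set; the norm-$2$ case uses the analogous hardness of covering by Euclidean disks. Alternatively one redesigns the gadgets directly, since their correctness relies only on a unit ball being convex, of bounded diameter, and able to ``reach'' two neighboring wire points but not three. The main obstacle, and where essentially all the work lies, is exactly this geometric gadget design: one must guarantee that the rigidity of unit balls forces the intended discrete choices and that no clever placement of real-valued centers can cover more points than the combinatorial optimum allows. By contrast, the reduction of the caching objective to coverage, the candidate-set discretization, and the accounting of the budget $k$ are routine.
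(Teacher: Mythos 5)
Your reduction is exactly the paper's: with the $0/1$ cost function, minimizing the static cost over a trace of distinct request points is equivalent to covering the maximum number of those points with $k$ unit discs (norm-$2$) or unit squares (norm-$1$), and NP-hardness then follows from the DISC-COVER and BOX-COVER results of Fowler--Paterson--Tanimoto, which is precisely the citation the paper uses. The only difference is that the paper stops there, whereas you propose to re-derive those covering hardness results via planar-SAT gadget constructions (left unfinished); since they are established results, citing them as the paper does is sufficient and your sketch adds no logical content beyond that.
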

\begin{proof}
We prove NP-hardness in the restricted case when every object is requested only once. We observe that any \mbox{object $y$} stored in the cache can satisfy requests for all the points in a disc (resp. square) 
centered in $y$ in the case of norm-2 (resp. norm-1) distance.
The problem of determining the optimal static set of objects to store in the cache to maximize the number of hits is then equivalent to the problem of finding $k$ identical translated geometric shapes covering
the largest number of points in the request sequence. These shapes are, respectively, discs and squares in the case of norm-2 and norm-1 distance.
NP-hardness follows immediately from the NP-hardness of the two covering problems on the plane known as DISC-COVER and BOX-COVER~\cite{fowler81}.
\end{proof} 
\begin{rem}
In $\mathbb R$, DISC-COVER and BOX-COVER are solvable in linear time.
\end{rem}
\begin{rem}
For geometric versions of the maximum coverage problem, better approximation ratios can be achieved. For example \cite{jin18} presents some efficient polynomial-time approximation schemes when the objects are point on a plane.
\end{rem}

We have already observed that exact caching is a particular case of similarity caching. Theorems~\ref{t:np_hard_combinatorial} and~\ref{t:np_hard_geometric} show that similarity caching is an intrinsically
 more difficult problem. 

\subsection{Dynamic setting}
For the dynamic setting (i.e., when the state of the cache can be  modified at every request), we propose a dynamic programming algorithm adapted from that proposed in~\cite{manasse90} for exact caching.

Let $\vr$ denote a finite sequence of requests for $m$ distinct objects, and $\vr x$ the sequence obtained appending to $\vr$ a new request for content $x$. We denote by $\sS_1$ the initial cache state, and by $C_{\textrm{OPT}}(\vr,\sS)$, the minimum aggregate cost achievable under the request sequence $\vr$, when the \emph{final} cache state {is~$\sS$}.
 It is possible to write the following recurrence equations, where $\bm \epsilon$ denotes the empty sequence:
 \begin{align*}
 C_{\textrm{OPT}}(\bm \epsilon, \sS) & = \begin{cases}
	0, & \textrm{ if }\sS=\sS_1,\\
	+\infty, & \textrm{otherwise.}
\end{cases}\\
C_{\textrm{OPT}}(\vr x, \sS) & = \begin{cases}
	\min\limits_{\sT}\left(C_{\textrm{OPT}}(\vr, \sT) + C_m(\sT,\sS)\right), & \textrm{ if }x \in \sS,\\
	C_{\textrm{OPT}}(\vr, \sS) + C(x,\sS),  & \textrm{otherwise.}
\end{cases}
\end{align*}
These equations lead to a dynamic programming procedure that 
iteratively computes the optimal cost for 
$\vr_T$ and determine the corresponding sequence of caching decisions. 
The number of possible states is $\binom{m}{k}$. We can arrive to each state $\sS$ potentially from $(m-k)k+1$ other states. $d(T,S)$ and $C_a(x,\sS)$ can be evaluated with $\bigO(k)$ operations. 
Algorithm's time complexity is $\bigO\left((m-k) k^2 \binom{m}{k} T\right)$. Space complexity is at least $\binom{m}{k}$.
As this algorithm can only be applied to small cache/catalog sizes, we will derive more useful bounds for the optimal cost in Sect.~\ref{s:bounds} for the stochastic scenario.

\begin{rem}
 This algorithm is especially suited for the case when the number of requests $T$ is much larger than the number $m$ of unique contents appearing in the sequence. For the classic $k$-server problem there exists also an alternative algorithm, based on minimum cost network flow, with complexity $k T^2$~\cite{chrobak91}, that is preferable in the opposite scenario. It is an open problem if a similar algorithm can be designed for similarity caching.
\end{rem}

\section{Competitive analysis under adversarial requests}
\label{s:adversarial}
The usual worst case analysis is not particularly illuminating for caching problems: if an adversary can arbitrarily select the request sequence, then the performance of any caching policy can be arbitrarily bad. For example, with a catalog of $k+1$ objects, the adversary can make any deterministic algorithm achieve a null hit rate by simply asking at any time the content that is not currently stored in the cache. 

For this reason, the seminal work of Sleator and Tarjan~\cite{sleator85} introduced \emph{competitive analysis} to characterize the relative performance of caching policies in comparison \textcolor{black}{to the best possible offline policy with hindsight, i.e.,~under the assumption that the sequence of requests 
is known before caching decisions are taken}.\footnote{
	By now, competitive analysis has become a standard approach to study the performance of many other algorithms.
}
In particular, an online caching algorithm $A$ is said to be $\rho$-competitive, if its performance is within a factor $\rho$ (plus a constant) from the optimum. More formally, there exists $a$ such that 
\[\sC_A(\sS_1, \vr_T) \le \rho \; \sC_B(\sS_1, \vr_T) + \frac{a}{T}, \textrm{ for all } B \textrm{ and }\vr_T. \]

A competitive analysis of similarity caching in the particular case when $C_a(x,y)=0$ if $d(x,y)\le r$, where $d()$ is a distance in $\sX$, is in~\cite{chierichetti09} (the only theoretical study of similarity caching we are aware of). In this section we present results for other particular cases, relying on existing work for the $k$-server problem with excursions. 

The $k$-server problem~\cite{manasse90} is perhaps the ``most influential online problem [\dots] that manifests the richness of competitive analysis''~\cite{koutsoupias09}. In the $k$-server problem, at each time instant a new request arrives over a metric space and the user has to decide which server to move to serve it, paying a cost equal to the distance between the previous position of the server and the request. It is well known that the $k$-server problem generalizes the exact caching problem. 
In particular, in the case when distances are all equal to one (the uniform metric space), the cost of $k$-server is equal to the total number of misses achieved by a caching policy that is forced to store the 
 requested content. 

Interestingly, Manasse and McGeoch's seminal paper on the $k$-server problem~\cite{manasse90} also introduces the following variant: a server can perform an excursion to serve the new request and then come back to the original point paying a cost determined by a different function. Similarity caching problem can be considered as a $k$-server problem with excursions where server movements have uniform cost $C_r$ and the excursion of a server in $y$ to serve a request for $x$ has cost
\begin{equation}
\label{e:excursion_cost}
C_e(x,y) \triangleq \min(C_a(x,y),C_r).
\end{equation}

Unfortunately, while we have found a noble relative of our problem in the algorithmic field,
not much is known about the $k$-server problem with excursions in the scenario we are interested in (uniform metric space for movements and generic metric space for excursions). We rephrase a few existing results in terms of the similarity caching problem. The first one applies to the case when the cache can contain all objects but one. The second one applies to the uniform scenario where each object can  equally well approximate any other object. 
We hope that the important applications of similarity caching will motivate further research on the $k$-server problem with excursions.
\begin{thm}
\label{t:small_catalogue}
\cite[Sect.~6, Thm~10]{manasse90} Let $\alpha_u$ be an upper bound for the set $\{C_e(x,y)/C_r \;|\; x, y \in \sX 
\}$. If $|\sX|=k+1$, then the competitive ratio of any algorithm is bounded below by $(2 k+ 1)(1+\alpha_u)/(1+2 \alpha_u)$. Moreover, there exists a $(2 k +1)$-competitive deterministic algorithm (BAL).
\end{thm}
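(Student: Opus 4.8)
The plan is to argue entirely inside the $k$-server-with-excursions formulation set up above: server movements on the uniform metric each cost $C_r$, an excursion from $y$ to serve a request for $x$ costs $C_e(x,y)$ as in \eqref{e:excursion_cost}, and by hypothesis every excursion costs at most $\alpha_u C_r$. Since $|\sX|=k+1$, in any configuration the $k$ servers cover all but exactly one object. I would then specialize the two arguments of Manasse and McGeoch—a cruel-adversary lower bound and a potential-function analysis of the Balance algorithm—to this restricted catalogue.

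For the lower bound, I would fix an arbitrary deterministic online algorithm $A$ and let the adversary build $\vr_T$ by always requesting the unique object not covered by $A$'s current configuration. This forces $A$ at every step to either relocate a server (paying $C_r$) or perform an excursion (paying at most $\alpha_u C_r$), so its per-request cost is bounded below in either regime. The real content is to upper bound the offline optimum on the same sequence: I would exhibit a family of $k+1$ reference offline trajectories, each committed to keeping a distinct object permanently uncovered, and bound the \emph{sum} of their costs, so that the cheapest of them—hence the true optimum—is small. The point is that when $A$ is forced to relocate, only one reference trajectory is ``caught'' having to pay a movement, whereas excursion steps are charged the cheaper $C_e\le\alpha_u C_r$ uniformly. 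Comparing $\sC_A(\sS_1,\vr_T)$ to the averaged reference cost yields the ratio, and the precise value $(2 k+ 1)(1+\alpha_u)/(1+2 \alpha_u)$ emerges from letting the adversary optimize its interleaving of move-forcing and excursion-forcing requests so as to equalize the contributions of the $C_r$ and $\alpha_u C_r$ regimes; the two limits $\alpha_u\to 0$ and $\alpha_u\to\infty$ give the sanity checks $2k+1$ and $(2k+1)/2$.

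For the upper bound I would analyze the Balance algorithm (BAL), which maintains for each server a cumulative work counter and serves each request so as to keep these counters as even as possible, deciding between a permanent relocation and a temporary excursion according to whichever better balances the accumulated costs. Competitiveness would be established amortized: I would design a potential $\Phi$ comparing BAL's configuration and work counters against those of an optimal offline policy and show that, for every request, the amortized cost $\Delta(\text{BAL}) + \Delta\Phi \le (2k+1)\,\Delta(\text{OPT})$. Telescoping over the $T$ requests and dividing by $T$ then gives the $(2k+1)$-competitive guarantee, with the bounded initial/final value of $\Phi$ absorbed into the additive term $a/T$ of the competitiveness definition.

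The main obstacle is the exact constant in the lower bound. Obtaining $(2 k+ 1)(1+\alpha_u)/(1+2 \alpha_u)$ rather than a looser $\Theta(k)$ bound requires the adversary's schedule of move-forcing versus excursion-forcing requests to be tuned to the $\alpha_u$-dependent trade-off, and the averaging argument must simultaneously account for relocation steps and excursion steps of the reference trajectories; verifying that the offline optimum cannot escape this bound through a cleverer (hindsight) choice of when to move versus excurse, and that the potential-function bookkeeping for BAL stays tight once excursions are allowed, is where the genuine effort lies. By contrast, the reduction of similarity caching to the excursion problem is immediate from \eqref{e:excursion_cost}, so no additional work is needed there.
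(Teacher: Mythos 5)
Your reduction step is indeed all the paper itself supplies: Theorem~\ref{t:small_catalogue} is imported verbatim from \cite[Sect.~6, Thm~10]{manasse90}, and the paper's only argument is the observation, made in the surrounding text, that similarity caching is a $k$-server problem with uniform movement cost $C_r$ and excursion cost $C_e(x,y)=\min(C_a(x,y),C_r)$. So the substance of your proposal is the attempted re-derivation of the cited result, and there the lower-bound half has a genuine gap. First, a structural point: when $|\sX|=k+1$ the cruel adversary has no freedom at all --- it must request the unique uncovered object --- so there is no ``interleaving of move-forcing and excursion-forcing requests'' for it to tune; whether to relocate or to excurse is the \emph{online algorithm's} choice, and the constant must come from analyzing that rent-or-buy trade-off, not from an adversary schedule. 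Second, and more seriously, your averaging argument cannot yield the constant. Your $k+1$ static reference trajectories give, per request, a total cost of at most $\alpha_u C_r$ (only the single trajectory not covering the request pays, via an excursion), so the optimum is at most $\alpha_u C_r/(k+1)$ per request; but on the online side you only know that each step costs $\min(C_r, C_e(r_t,\cdot))$, and since $\alpha_u$ is an \emph{upper} bound on excursion ratios, individual excursions can be arbitrarily cheap, leaving no nontrivial lower bound on the online cost. Indeed, the per-instance statement you are trying to prove is false in the extreme case: if all excursion costs vanish, the algorithm that never moves pays zero on every sequence and is trivially competitive, yet the formula still reads $2k+1$.

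What this reveals is that the Manasse--McGeoch lower bound is a worst-case statement over instances: the adversary designs the excursion costs as well (in the hard instance they are pinned to $\alpha_u C_r$), and then $(2k+1)(1+\alpha_u)/(1+2\alpha_u)$ emerges from the online's forced choice between repeated excursions at exactly $\alpha_u C_r$ and a relocation at $C_r$, played against an offline that can either park on a phase's request point or serve an entire phase by excursions. A correct re-derivation must therefore construct the hard instance rather than quantify over all instances with ratio bound $\alpha_u$. Your BAL half has the right flavor (cumulative work counters, amortized comparison against the offline optimum via a potential), but as written it names no potential function and verifies no inequality, so it is a plan rather than a proof. Given that the paper treats the whole theorem as a citation, the clean options are either to do likewise --- reduction plus citation, which your final paragraph already contains --- or to commit to the full instance construction and potential-function bookkeeping of \cite{manasse90}.
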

Note that we can always select $\alpha_u =1$.
\begin{thm}
\label{t:uniform_space}
\cite[Thms~4.1-2]{bartal01} If $|\sX|>k$ and there exists $0<\alpha$ such that $C_e(x,y)=\alpha C_r$  for all $x, y \in \sX$ with $x \neq y$, then the competitive ratio of any algorithm is at least $2k+1$. Moreover, there exists a $(2k+1)$-competitive deterministic algorithm (RFWF).
\end{thm}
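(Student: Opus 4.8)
The plan is to observe that, under the uniform hypothesis, the problem is \emph{exactly} the uniform instance of the $k$-server problem with excursions analysed by Bartal \emph{et al.}~\cite{bartal01}, and then to transfer their matching lower and upper bounds. The paper has already identified similarity caching with a $k$-server problem with excursions whose server movements cost $C_r$ and whose excursion cost is $C_e(x,y)=\min(C_a(x,y),C_r)$ (see~\eqref{e:excursion_cost}). So the first step is to check that the hypothesis $C_e(x,y)=\alpha C_r$ for all distinct $x,y$ collapses this to the \emph{doubly uniform} instance: every movement costs $C_r$ and every excursion costs $\alpha C_r$, independently of the endpoints. I would verify the cost correspondence request by request. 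An exact hit ($r_t\in\sS_t$) corresponds to a server already sitting on the request and costs $0$; a miss served by fetching corresponds to moving one server onto $r_t$ at cost $C_m(\sS_t,\sS_{t+1})=C_r$; and an approximate hit corresponds to an excursion of cost $C(r_t,\sS_t)=\min(C_a(r_t,\sS_t),C_r)=\alpha C_r$, leaving the state unchanged. The key point to pin down is that, because every stored object approximates $r_t$ equally well, the cost $C_a(r_t,\sS_t)$ does not depend on the configuration $\sS_t$ (as long as $r_t\notin\sS_t$ and $\sS_t\neq\emptyset$); hence the excursion penalty is a single constant $\alpha C_r$, and the problem degenerates to paging with a uniform rejection penalty, which is precisely what \cite{bartal01} studies.

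With this identification in hand, both claims follow by invoking \cite[Thms~4.1-2]{bartal01}. For the lower bound I would note that the hypothesis $|\sX|>k$ is exactly what the adversary needs: at every step there is at least one object outside the online cache, so the adversary can keep presenting requests that the online policy can answer only by paying $C_r$ (fetch) or $\alpha C_r$ (excursion), while an offline policy amortises its cost over an entire phase; Bartal \emph{et al.}'s construction turns this into the bound $2k+1$, valid for \emph{every} (even randomized-against-oblivious) algorithm. For the upper bound I would simply exhibit their deterministic algorithm RFWF and cite its $(2k+1)$-competitiveness on this metric, so that the two bounds coincide and the ratio $2k+1$ is tight. Note that this is strictly worse than the exact-caching optimum of $k$, which quantifies the extra power the excursion/rejection option hands to the adversary.

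The genuinely delicate part is not the competitive bounds themselves---those are imported---but making the reduction airtight, so that is where I expect the real work to lie. I would pay particular attention to three points: (i) the additive-constant form of the competitive ratio used here, $\sC_A(\sS_1,\vr_T)\le\rho\,\sC_B(\sS_1,\vr_T)+a/T$, which matches the total-cost form of \cite{bartal01} only after dividing by $T$, the normalisation coming from the time-averaging in~\eqref{e:avg_cost}; (ii) the degenerate regime $\alpha\le 1$ that is actually forced by $C_e=\min(C_a,C_r)$, together with the empty/singleton-cache corner cases; and (iii) checking that restricting to policies which approximate only when $C_a(r_t,\sS_t)\le C_r$, as justified in Sect.~\ref{s:model}, loses no generality under the mapping, since in the uniform case this restriction is vacuous for $\alpha\le 1$. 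Once these are settled, Theorem~\ref{t:uniform_space} is a direct corollary of the cited results, in the same spirit as the transfer of \cite[Thm~10]{manasse90} used for Theorem~\ref{t:small_catalogue}.
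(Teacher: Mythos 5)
Your proposal matches the paper's treatment: Theorem~\ref{t:uniform_space} is not proved from scratch in the paper but obtained exactly as you describe, by identifying similarity caching under the uniform hypothesis with the (doubly uniform) $k$-server problem with excursions---movement cost $C_r$, excursion cost $C_e(x,y)=\alpha C_r$---and then importing the matching lower and upper bounds of \cite[Thms~4.1-2]{bartal01}, including the RFWF algorithm. Your additional care about the reduction (state-independence of the excursion cost, the time-averaged form of the competitive ratio, and the regime $\alpha\le 1$ forced by $C_e=\min(C_a,C_r)$) is sound and consistent with the correspondence the paper sets up in Sect.~\ref{s:adversarial}.
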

A $(4k+1)$-competitive algorithm is proposed in~\cite{hollander96}.
\cite{chung01}~studies 1-server with limited look-ahead (i.e.~considers policies that know a given number of future requests).

\section{Stochastic request process}\label{s:stochastic}
We now consider the case \textcolor{black}{when requests arrival times follow a Poisson process with (normalized) intensity $1$ and the requested object is drawn from $\sX$ according to the same distribution independently from the past, i.e., requests are i.i.d.~distributed}. In the finite case ($|\sX|<\infty$), we have a request rate $\lambda_x$ for each content $x$
and we essentially obtain the classic IRM. In the continuous case,
we need to consider a spatial density of requests defined by a Borel-measurable 
function $\lambda_x:\,\mathcal{X} \to \mathbb{R}_+$, i.e., for every Borel set $\mathcal A \subseteq \sX$, the rate with which contents in $\mathcal A$ are requested is given by\footnote{More in general,
we could consider a generic compact metric space $\sX$ endowed with the Haar measure, i.e., 
the unique (up to multiplicative constant) finite Borel measure, which is translation invariant over $\sX$.} 
$\int_{\mathcal A} \lambda_x \diff x$.


Under the above assumptions, for a given cache state $\sS=\{y_1 \ldots y_k\}$, we can compute the corresponding expected cost to serve a request:
\begin{equation}\label{eq:conditionalcost}
\expC(\sS) \triangleq
	\begin{cases}
		\sum_x \lambda_x C(x,\sS), & \textrm{finite case}\\
		\int_{\sX} \lambda_x C(x,\sS) \diff x, & \textrm{continuous case.} 
	\end{cases}
\end{equation} 

We observe that, as the sequence of future requests does not depend on the past, the average cost incurred over time by any online caching algorithm $A$ is bounded with probability 1 (w.p.~1) by the minimum expected cost $\min_\sS \expC(\sS)$,
\begin{align}
\label{e:opt_bound}
\liminf_{T \to \infty} \mathcal C_A(\sS_1,\vr_T) 
	& \ge \min_{\sS} \expC(\sS), \textrm{ w.p.} 1. 
\end{align}
{\color{black} A  proof is given in Appendix \ref{a:mincost-pol}. 
}
We then say that an online caching algorithm is optimal if its time-average cost achieves the lower bound in~\eqref{e:opt_bound} w.p.~$1$. 
For example, an algorithm that reaches a state $\sS^* \in \argmin_{\sS} \expC(\sS)$ and, then, does not change its state is optimal. More in general, an optimal algorithm visits states with non minimum expected cost only a vanishing fraction of time.    
Unfortunately, finding an optimal set of objects $\sS^*$ to store is an NP-hard problem. In fact, minimizing \eqref{eq:conditionalcost} is a weighted version of the problem considered in~Sect.~\ref{s:offline}. 
Despite the intrinsic difficulty of the problem, we present some online caching policies that achieve a global or local minimum of the cost. 
We call a policy \mbox{$\lambda$-aware} (resp.~\mbox{$\lambda$-unaware}), if it relies (resp.~does not rely) on the knowledge of $\lambda_x$.

In practice, \mbox{$\lambda$-aware} policies are meaningful only when objects' popularities do not  vary wildly over time, remaining approximately constant over time-scales
in which $\lambda_x$ can be estimated through runtime measurements, similarly to what has been done in the case of exact caching by various implementations 
of the Least Frequently Used (\lfu) policy (see e.g.~\cite{einziger14}).  
In contrast, \mbox{$\lambda$-unaware} policies do not suffer from this limitation.
  

Sections~\ref{s:lambda_aware} and~\ref{s:lambda_unaware} below are respectively devoted to  \mbox{$\lambda$-aware} and \mbox{$\lambda$-unaware} policies. 
Section~\ref{s:bounds} presents some lower bounds for the cost of the optimal cache configuration in the continuous scenario.
 
\subsection{Online $\lambda$-aware policies}
\label{s:lambda_aware}
The first policy we present, \greedy, is based on the simple idea to systematically move to states with a smaller expected cost~\eqref{eq:conditionalcost}. It works as follows. Upon a request for content $x$ at time $t$, \greedy{} computes the  maximum decrement  in the expected cost that can be obtained by replacing one of the objects currently in the cache with $x$, i.e.,
$\Delta \expC \triangleq \min_{y \in \sS} \expC(\sS_t\cup \{x\} \setminus \{y\})-  \expC(\sS_t)$. 
\begin{itemize}
\item if $\Delta \expC < 0$ ($x$ contributes to decrease the cost), then the cache retrieves $x$, serves it to the user, and replaces $y_e \in \argmin_{y \in \sS} \expC(\sS_t\cup \{x\} \setminus \{y\})$ with $x$;
\item if $\Delta \expC \ge  0$, the cache state is not updated. If $C_a(x,\sS_t) > C_r$, $x$ is retrieved to serve the request; otherwise the request is satisfied by one of the best approximating object in $\sS_t$.
\end{itemize}

Intuitively, we expect \greedy{} to converge to a local minimum of the cost. In the continuous case, special attention is required to correctly define and prove this result.

\begin{dfn}\label{def:sig}
 A content $y_c$ is  said significant if, for any  $\delta>0$, it holds:
$ \int_{\sB(y_c,\delta )}\lambda_x \diff  x>0,$
where $\sB(y_c,\delta )$ is the ball of volume $\delta$ centered at $y_c$.
\end{dfn}

\begin{dfn}
	A  cache configuration  $\sS$ is locally optimal if  	
$	\expC(\sS) \le \expC(\sS'),	$
for all $\sS'$  obtained from  $\sS$  by replacing only one of the contents in the cache with a significant  content $y_c$. 
\end{dfn}	

\begin{thm}
\label{t:greedy}
	If  $C_a()$ and $\lambda()$ are smooth and $\sX$ is a compact set, the   expected cost  of \greedy{} converges  to the expected  cost of a configuration that is locally optimal w.p.~1.  
	If ${\mathcal X}$ is a finite set, the cache state converges to a locally optimal configuration in finite time w.p.~1. 
\end{thm}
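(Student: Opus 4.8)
The plan is to exploit the monotonicity of the expected cost along \greedy{}'s trajectory. Since \greedy{} updates the state only when the best replacement strictly decreases $\expC(\cdot)$, the sequence $\expC(\sS_t)$ is non-increasing, and it is bounded (below by $0$, above by $C_r$); hence for every realization it converges to some limit $\expC^\star\ge 0$. The whole difficulty is to show that $\expC^\star$ is the cost of a locally optimal configuration. Writing $I(\sS,x)\triangleq\max\bigl(0,\,\expC(\sS)-\min_i\expC(\sS\cup\{x\}\setminus\{y_i\})\bigr)$ for the decrement \greedy{} realizes on a request for $x$ in state $\sS$, one has the exact telescoping identity $\expC(\sS_{t+1})=\expC(\sS_t)-I(\sS_t,r_t)$, so that $\sum_{t\ge 1}I(\sS_t,r_t)=\expC(\sS_1)-\expC^\star<\infty$ surely.

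\textbf{Finite case.} Here $\expC(\cdot)$ takes only finitely many values (one per configuration), so the non-increasing sequence $\expC(\sS_t)$ can strictly decrease only finitely often; since every state change is a strict decrease, \greedy{} performs finitely many evictions and the state is eventually equal to some fixed $\sS_\infty$. It remains to argue $\sS_\infty$ is locally optimal w.p.~$1$. If it were not, there would be a significant content $y_c$ (i.e.~with $\lambda_{y_c}>0$) and an index $i$ with $\expC(\sS_\infty\cup\{y_c\}\setminus\{y_i\})<\expC(\sS_\infty)$; but $y_c$ is requested infinitely often (positive rate), and the first such request after the state has settled would give $I(\sS_\infty,y_c)>0$ and trigger a further eviction, contradicting constancy. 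Hence $\sS_\infty$ is locally optimal and convergence occurs in finite (random) time.

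\textbf{Continuous case.} Here $\expC$ is no longer finitely valued and $(\sS_t)$ need not converge, so I would characterise the limit through the expected one-step improvement $G(\sS)\triangleq\EX{I(\sS,r)}=\int_\sX I(\sS,x)\lambda_x\diff x$. Because requests are i.i.d.~and $\sS_t$ is measurable with respect to $\mathcal F_{t-1}=\sigma(r_1,\dots,r_{t-1})$, the process $M_t=\sum_{s\le t}\bigl(I(\sS_s,r_s)-G(\sS_s)\bigr)$ is a martingale with increments bounded by $C_r$. Since $\sum_s I(\sS_s,r_s)$ converges, a divergence of $\sum_s G(\sS_s)$ would force $M_t\to-\infty$, which a bounded-increment martingale cannot do on a positive-probability event; therefore $\sum_s G(\sS_s)<\infty$ and hence $G(\sS_t)\to 0$ w.p.~$1$. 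Smoothness of $C_a$ and $\lambda$ makes $\expC$ and $G$ continuous on the compact set $\sX^k$, and a short argument shows $G(\sS)=0$ exactly when $\sS$ is locally optimal: local optimality is $I(\sS,y_c)=0$ for every significant $y_c$, i.e.~for every $y_c$ in the support of $\lambda$, which by continuity of $I(\sS,\cdot)$ is equivalent to $\int_\sX I(\sS,x)\lambda_x\diff x=0$. Finally, by compactness every subsequential limit $\bar\sS$ of $(\sS_t)$ satisfies $G(\bar\sS)=0$ and $\expC(\bar\sS)=\expC^\star$, so $\expC^\star$ is the cost of a locally optimal configuration.

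\textbf{Main obstacle.} The delicate point is precisely the continuous case: one cannot simply argue that ``a profitable significant content will eventually be requested while the state sits near a non-optimal limit,'' because coordinating the random times at which $\sS_t$ is close to a candidate limit with the arrival of a request in a prescribed region is awkward and measure-theoretically fragile. The martingale / summable-improvement argument sidesteps this coordination entirely, reducing the claim to the continuity of $G$ and the equivalence $\{G=0\}=\{\text{locally optimal}\}$, which is exactly where the smoothness and compactness hypotheses enter.
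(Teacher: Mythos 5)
Your proof is correct, and in the continuous case it takes a genuinely different route from the paper's. The finite-case argument is essentially the paper's own (monotone, finitely-valued cost, hence settling in finite time; a settled non-locally-optimal state would be upset almost surely by a later request for the profitable significant content). For the continuous case, the paper argues in distribution: the costs $C(\mathcal{Y}_n)$ form a bounded supermartingale, hence converge by Doob; Prokhorov's theorem extracts a subsequence of configurations converging weakly to some $\mathcal{Y}_\infty$; the open sets $\mathcal{Z}_{a,p}$ of configurations improvable by a margin $a$ using request mass at least $p$ are shown to be visited with vanishing probability (otherwise the expected cost at a late time would exceed the limit cost by at least $ap$, a contradiction), and Portmanteau transfers this to $\mathcal{Y}_\infty$. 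You argue pathwise instead: the exact telescoping identity makes the realized improvements $I(\sS_t,r_t)$ surely summable; the dichotomy theorem for martingales with increments bounded by $C_r$ (note you need $0\le I,G\le C_r$, which uses the normalization of the total request rate) upgrades this to almost-sure summability of the conditional means $G(\sS_t)$, hence $G(\sS_t)\to 0$; continuity of $G$ and $\expC$ (from smoothness of $C_a$ and $\lambda$, via dominated convergence) plus compactness of $\sX^k$ then force every subsequential limit of the trajectory to satisfy $G=0$, i.e., to be locally optimal, with cost equal to $\expC^\star$. Your route dispenses with the weak-convergence machinery entirely and yields a slightly stronger, pathwise conclusion (almost surely, every limit point of the cache trajectory is locally optimal and shares the limit cost), while the paper's $\mathcal{Z}_{a,p}$ construction makes explicit the quantitative trade-off between improvement margin and request mass that is hidden inside your equivalence $\{G=0\}=\{\text{locally optimal}\}$. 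On that equivalence, the only direction you need is that a non-locally-optimal state has $G>0$: there is a significant $y_c$ with $I(\sS,y_c)>0$, continuity of $I(\sS,\cdot)$ keeps $I$ bounded away from zero on a small ball around $y_c$, and significance (every ball around $y_c$ carries positive request mass) makes the integral positive. Your gloss identifying significant points with the support of $\lambda$ is slightly imprecise (a significant point may have $\lambda_{y_c}=0$), but it is dispensable: the argument just sketched is the one that matters, and it is exactly where smoothness, compactness, and the definition of significance enter in both proofs.
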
	
The proof is in Appendix~\ref{a:greedy}.

The \greedy{} policy converges to a locally optimal configuration. In the finite catalog case, under knowledge of content popularities, it is possible to asymptotically achieve the global optimal configuration using a policy that mimics a simulated annealing optimization algorithm. This policy is adapted from the \sa{} policy (Online Simulated Annealing) proposed in~\cite{neglia18ton}, and we keep the same name here. 
\sa{} maintains a dynamic parameter $T(t)$ (the temperature). Upon a request for content $x$ at iteration $t$, \sa{} modifies the cache state as follows:
\begin{itemize}
\item If $x \in \sS_t$, the state of  the cache is unchanged.
\item If $x \notin \sS_t$, a content $y \in \sS$ is randomly selected according to some vector of positive probabilities $p(\sS_t)$, and the state of the cache is changed to $\sS' = \sS_t \setminus \{y\} \cup \{x\} $ with probability $\min\left(1, \exp((\expC(\sS_t)-\expC(\sS'))/T(t))\right)$.
\end{itemize}
In the first case, \sa{} obviously serves $x$ (a hit). In the second case, if the state changes to $\sS'$, the cache  serves $x$. Otherwise, it serves  $x$ or $x' \in \argmin_{z \in \sS} C_a(x,z)$, respectively,
 if $C_a(x,\sS) > C_r$ or $C_a(x,\sS) \le C_r$.
\sa{} always stores a new content if this reduces the cost (as \greedy{} does), but it does not get stuck in a local minimum because it can also accept  apparently harmful changes with a probability that is decreasing in the cost increase. 
By letting the temperature $T(t)$ decrease  over time, the probability to move to worse states converges to 0 over time: the algorithm explores a larger part of the solution space at the beginning and becomes more and more ``greedy'' as time goes by.
The eviction probability vector $p(\sS)$ can be arbitrarily chosen, as far as each content in $\sS$ has a positive probability to be selected. In practice, we want to select with larger probability contents in $\sS$, whose contribution to the cost reduction is smaller.
 
\sa{} provides the following theoretical guarantees. Let $\Delta \expC_{\max}$ be the maximum absolute difference of costs between two neighboring states, then 
 \begin{thm}
\label{t:osa}
When $|\mathcal{X}|<\infty$, if $T(t)=\Delta \expC_{\max} k /(1+\log t)$, asymptotically only the states with minimum cost have a non-null probability to be visited.
\end{thm}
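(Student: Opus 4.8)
The plan is to recognize \sa{} as an instance of the generalized simulated annealing framework of Hajek and then invoke the classical cooling-schedule theorem. First I would set up the relevant time-inhomogeneous Markov chain, indexed by the request number~$t$. Since each request that hits the cache ($x\in\sS_t$) leaves the state unchanged, the state evolution is driven only by requests for stored-elsewhere contents. The state space is the set of cache configurations $\sS\subset\sX$ with $|\sS|=k$, and I would call two configurations neighbors when they differ in exactly one element. The transition from $\sS$ to a neighbor $\sS'=\sS\setminus\{y\}\cup\{x\}$ factorizes into a \emph{generation} probability $R(\sS,\sS')\propto \lambda_x\,p_y(\sS)$---the product of the probability that $x$ is the next requested content and the probability $p_y(\sS)$ that $y$ is evicted---and the Metropolis \emph{acceptance} probability $\min\bigl(1,\exp(-(\expC(\sS')-\expC(\sS))/T(t))\bigr)$, exactly the form Hajek's setting requires. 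I would take $\expC(\cdot)$ from \eqref{eq:conditionalcost} as the energy landscape; it is finite and takes finitely many values since $|\sX|<\infty$ and $C_r<\infty$, so $\Delta\expC_{\max}<\infty$.

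Next I would verify the two structural hypotheses of Hajek's theorem. For \emph{irreducibility}, the single-swap neighborhood graph on size-$k$ subsets is connected, and every adjacent transition has strictly positive generation probability because $\lambda_x>0$ for every content and $p_y(\sS)>0$ by assumption; hence $R$ is irreducible. For \emph{weak reversibility}, whenever $R(\sS,\sS')>0$ for neighbors, the reverse move is generated by requesting $y$ (with $y\notin\sS'$) and evicting $x$, so $R(\sS',\sS)\propto\lambda_y\,p_x(\sS')>0$ as well; the symmetry of the adjacency relation together with positivity in both directions yields the height-communication symmetry Hajek demands. I would also note that the generation matrix is time-homogeneous---only $T(t)$ varies with $t$, and it is non-increasing with $T(t)\to 0$---so the dynamics fit the theorem's assumptions.

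With the framework in place, the conclusion reduces to checking that the constant in the schedule dominates the landscape's maximal depth. Hajek's theorem guarantees that the occupation probability concentrates on the set of global minima, i.e., $\lim_{t\to\infty}\PR{\sS_t=\sS}=0$ for every $\sS$ that is not a global minimizer of $\expC(\cdot)$, provided the schedule has the form $T(t)=c/(1+\log t)$ with $c\ge d^*$, where $d^*$ is the largest depth of any local-but-not-global minimum (the smallest energy barrier that must be climbed to escape toward a strictly lower-energy state). I would bound $d^*$ by a short path argument. Fix a non-global local minimum $\sS$ and any global minimizer $\sS^*$; since both have size $k$, they differ in $j\le k$ elements and can be joined by a path $\sS=\sS_0,\sS_1,\dots,\sS_j=\sS^*$ of single swaps. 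Each edge changes the energy by at most $\Delta\expC_{\max}$, so $\expC(\sS_i)\le\expC(\sS)+i\,\Delta\expC_{\max}\le\expC(\sS)+k\,\Delta\expC_{\max}$ for all $i$. As $\sS^*$ is strictly lower than $\sS$, this escape path certifies $d(\sS)\le k\,\Delta\expC_{\max}$, hence $d^*\le k\,\Delta\expC_{\max}$. With the prescribed $T(t)=\Delta\expC_{\max}\,k/(1+\log t)$ we have exactly $c=k\,\Delta\expC_{\max}\ge d^*$, so Hajek's condition holds and only minimum-cost states retain positive visiting probability asymptotically.

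The main obstacle I anticipate is not the depth bound---which is the short path-counting argument above---but the careful verification that \sa{}'s request-coupled, non-symmetric generation mechanism genuinely satisfies Hajek's hypotheses. In textbook simulated annealing the proposal is symmetric, whereas here $R(\sS,\sS')\propto\lambda_x p_y(\sS)$ need not equal $R(\sS',\sS)\propto\lambda_y p_x(\sS')$. One must therefore rely on the \emph{weak}-reversibility version of the theorem and confirm that the asymmetry affects only the $\mathcal O(1)$ prefactors and not the exponential rates that set the critical constant $c$; this is where I would spend the most care, following the adaptation in~\cite{neglia18ton}. A secondary point to state explicitly is the standing assumption $\lambda_x>0$ for all $x\in\sX$ (or a restriction to the requestable contents), without which some configurations are unreachable and ``minimum cost'' must be read relative to the reachable component.
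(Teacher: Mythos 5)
Your proposal is correct and follows essentially the same route as the paper's own proof: both recognize \sa{} as a generalized (weakly-reversible, request-driven) simulated annealing scheme with generation probabilities $R(\sS,\sS') \propto \lambda_x (p(\sS))_y$ and Metropolis acceptance, verify irreducibility and Hajek's weak-reversibility condition from the symmetry $R(\sS,\sS')>0 \Leftrightarrow R(\sS',\sS)>0$, bound the relevant escape height by $k\,\Delta\expC_{\max}$ via a path of at most $k$ single swaps, and conclude by invoking Hajek's theorem for the schedule $T(t)=\Delta\expC_{\max}k/(1+\log t)$. Your treatment is, if anything, slightly more explicit than the paper's on the depth bound and on the standing assumption $\lambda_x>0$.
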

The proof is in Appendix~\ref{a:osa}.

As it is usual for simulated annealing results, convergence is guaranteed under very slow decrease of the temperature parameter (inversely proportional to the logarithm of the number of iterations). In practice, much faster cooling rates are adopted and convergence is still empirically observed.

Figure~\ref{f:toy_example} shows a toy case with a catalog of 4 contents and cache size equal to 2, for which \greedy{} with probability at least 9/20 converges to a suboptimal state $\sS=\{1,3\}$ with corresponding cost $\expC(\sS)=17/128$. On the contrary, \sa{} escapes from this local minimum and asymptotically converges to the optimal state $\sS^* =\{2,4\}$ with $\expC(\sS^*)=6/128$. 

\begin{figure}
\centering
\includegraphics[scale=0.35]{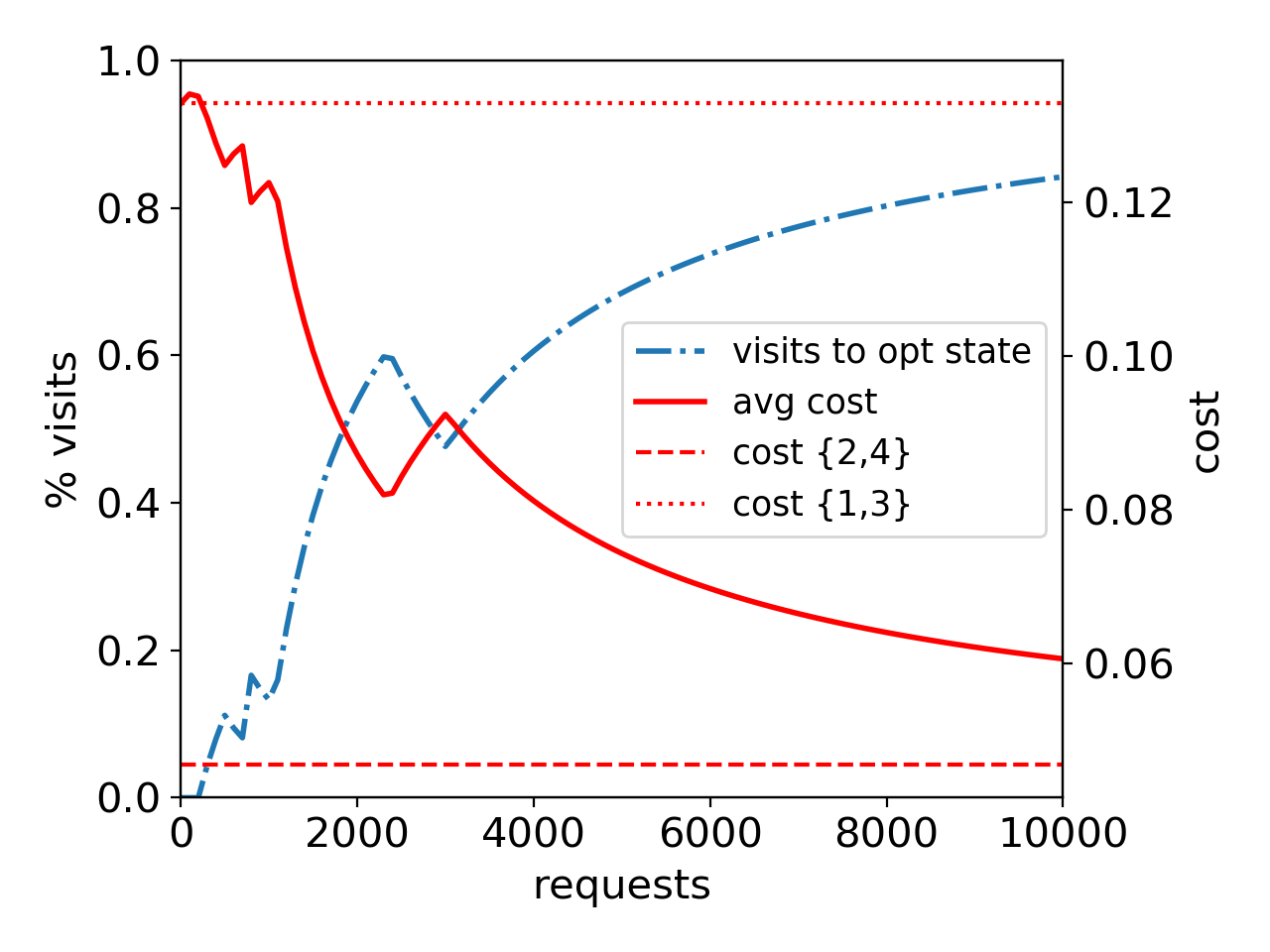}
\caption{\sa{} converges to the minimum cost state. Catalog: $\{1,2,3,4\}$; $C_a(1,2)=C_a(2,1) = C_a(2,3)=C_a(3,2)=1/16$; for all other pairs $(x,y)$, $C_a(x,y)=\infty$; $C_r=1$; $\lambda_2=\lambda_4=1/8$, 
$\lambda_1=\lambda_3= 3/8$, $T(t)=1/\sqrt{t}$.}
\label{f:toy_example}
\end{figure}


\subsection{Online $\lambda$-unaware policies}
\label{s:lambda_unaware}
In this section we present two new policies, \qlrud{} and \duel, that, without knowledge of $\lambda_x$, bias admission and eviction decisions so to statistically  favour  configurations with
low {cost~$\expC$}.

Policy \qlrud{} is inspired by \mbox{$q$LRU-$\Delta$} proposed in~\cite{neglia19swiss_arxiv} to coordinate caching decisions across different base stations to maximize a generic utility function. 
Despite the different application scenario, there are deep similarities between how different copies of the same content interact in the dense cellular network scenario of~\cite{leonardi18jsac} and how different contents interact in a similarity cache.

In~\qlrud{} the cache is managed as a ordered list (queue)  as follows. Let $x$ be the content requested at time $t$.
\begin{itemize}
\item If $C_a(x,\sS_t)>C_r$, there is a miss. The cache retrieves the content $x$ to serve it to the user. The content is inserted at the front of the queue, with probability $q$.
\item If $C_a(x,\sS_t)\le C_r$, there is an approximate hit. The cache serves a content $z\in \argmin_{y \in \sS_t} C_a(x,y)$, that is \emph{refreshed}, i.e., it is moved to the front of the queue,  with probability $\frac{C(x,\sS_t \setminus \{z\})- C_a(x,z)}{C_r}$. With probability $q C_a(x,z)/C_r$ the content $x$ is still retrieved from the remote server and inserted at the head of the queue.
\end{itemize}
If needed, contents are evicted from the tail of the queue.
We observe that $C(x,\sS_t \setminus \{z\})- C_a(x,z)$ corresponds to the cost saving for the request $x$ due to the presence of $z$ in the cache. 

\begin{rem}
An approximate hit may jointly lead to insert the new content $x$ as well as to bring the approximating content $z$ to the head of the queue.
\end{rem}

When $|\mathcal{X}|$ is finite,  the following result holds under the characteristic time (or Che's) approximation (CTA)~\cite{che02} and the exponentialization approximation (EA), that has been recently proposed and validated in~\cite{leonardi18jsac}.
\begin{thm}
\label{t:qlru_opt}
Under CTA and EA, when $|\mathcal{X}|<\infty$ and $q$ converges to $0$, \qlrud{} stores a set of contents that corresponds to a local minimum of the cost. 
\end{thm}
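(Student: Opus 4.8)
The plan is to show that, under CTA and EA, the sequence of cache configurations is a finite continuous-time Markov chain whose jumps, in the limit $q\to 0$, perform a stochastic descent on the cost $\expC(\cdot)$ whose only rest points are the locally optimal configurations. First I would write down the chain and its rates. Under CTA each stored content is evicted once a characteristic time $t_C$ has elapsed since its last insertion or refresh, and EA replaces this deterministic timer by an exponential one with the same mean, so that the set $\sS_t$ of the $k$ stored contents becomes a Markov chain on configurations. Reading off the policy, from a full state $\sS$ a content $x\notin\sS$ is admitted at the head at rate
\[
\iota_x(\sS)=\lambda_x\, q\,\frac{C(x,\sS)}{C_r},
\]
(the miss case $C(x,\sS)=C_r$ and the approximate-hit case $C(x,\sS)=C_a(x,z)$ collapsing into the same expression), while a stored content $z$ is refreshed at rate
\[
\varrho_z(\sS)=\frac{1}{C_r}\sum_{x}\lambda_x\bigl(C(x,\sS\setminus\{z\})-C(x,\sS)\bigr)=\frac{\expC(\sS\setminus\{z\})-\expC(\sS)}{C_r}.
\]
The key observation is that the refresh rate of $z$ equals, up to the factor $1/C_r$, its marginal value $\Delta_z(\sS)\triangleq\expC(\sS\setminus\{z\})-\expC(\sS)\ge 0$, i.e.\ the cost that would be lost if $z$ were removed, whereas the admission rate of $x$ is proportional to the cost $C(x,\sS)$ that $x$ currently pays.

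Second, I would analyse the $q\to 0$ regime. Since admissions occur at rate $O(q)$, the self-consistent characteristic time diverges, $t_C\to\infty$, and under CTA the mean in-cache sojourn of a content with refresh rate $\varrho$ grows like $(e^{\varrho\, t_C}-1)/\varrho$. Hence the eviction rates of two stored contents with distinct marginal values become exponentially separated, and the content that the chain removes upon the next admission is, with probability tending to one, the stored content of smallest marginal value. I would make this precise by showing that each configuration change is a single swap $\sS\mapsto\sS\cup\{x\}\setminus\{y\}$ in which $x$ is the just-requested admitted content and $y=\argmin_{w\in\sS\cup\{x\}}\Delta_w(\sS\cup\{x\})$ is the least valuable content of the augmented state.

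Third, this identification turns $\expC$ into a Lyapunov function. Writing the augmented state $\sS^{+}=\sS\cup\{x\}$, the post-jump cost is $\expC(\sS^{+})+\min_{w}\Delta_w(\sS^{+})$, which cannot exceed $\expC(\sS^{+})+\Delta_x(\sS^{+})=\expC(\sS)$ because $x$ itself is one of the candidates for removal; thus $\expC(\sS_t)$ is non-increasing along the jumps. Being a non-increasing functional of a finite-state chain whose admissions reach every $x$ with $C(x,\sS)>0$ at positive rate, it must settle in a configuration from which no admission produces a strict decrease; by the displayed inequality this happens exactly when $\expC(\sS)\le\expC(\sS\cup\{x\}\setminus\{y\})$ for every $x\notin\sS$ and $y\in\sS$, i.e.\ when $\sS$ is locally optimal. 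Conversely, from any non-locally-optimal state an improving swap exists, and the greedy eviction rule selects an eviction at least as good as that swap, so such states are left with probability one.

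I expect the main obstacle to be the second step: rigorously justifying, uniformly in the configuration, both the EA decoupling that turns the coupled LRU sojourns into independent exponential clocks, and the resulting time-scale separation that sharpens the soft eviction into the hard $\argmin$ selection as $q\to0$. Controlling the self-consistent value of $t_C$, and checking that contents of zero marginal value do not accumulate enough occupancy to perturb the full-cache picture, is the delicate quantitative part; the descent argument of the third step is then comparatively routine.
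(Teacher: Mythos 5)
Your construction of the CTA/EA Markov chain coincides with the paper's: the insertion rate $q\lambda_x C(x,\sS)/C_r$, the identification of the aggregate refresh rate of a stored content with its marginal value $\Delta\expC_x(\sS)=\expC(\sS\setminus\{x\})-\expC(\sS)$ divided by $C_r$, and the resulting eviction rate $\nu_x(\sS)$ obtained from the sojourn-time formula are exactly the quantities the paper derives in its proof. Where you diverge is the final limiting argument, and that is where there is a genuine gap. For every fixed $q>0$ the chain is irreducible: every insertion and every eviction has positive rate, so $\expC(\sS_t)$ is \emph{not} a non-increasing process and the chain never ``settles.'' Your displayed inequality holds only for an idealized dynamics in which the evicted content is exactly the minimizer of the marginal value, an event whose probability tends to one as $q\to 0$ but is never equal to one; at fixed $q$ the complementary event occurs infinitely often, and the cost climbs infinitely often. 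The theorem is a statement about which states retain stationary probability as $q\to 0$ (stochastic stability), i.e., about $\lim_{q\to 0}\lim_{t\to\infty}$, whereas your descent argument analyzes the dynamics obtained by letting $q\to 0$ first and then running time to infinity; the interchange of these limits is precisely what must be justified and cannot be waved through with a Lyapunov function alone.

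The paper closes this gap with the theory of regular perturbations of Markov chains~\cite{young93}, adapting the argument of~\cite{neglia19swiss_arxiv}: with $T_c(q)$ calibrated by the occupancy constraint (along a sequence $q_n$ one has $T_c = (C_r/\gamma)\log(1/q)$), eviction rates scale as powers $q^{\Delta\expC_x(\sS)/\gamma}$ and insertions as $q^{1}$, so every transition carries a well-defined resistance, and the stochastically stable states are characterized through the potential $\phi(\sS)=\expC(\emptyset)-\expC(\sS)-\gamma|\sS|$ which identifies the dominant transitions. This formalism also repairs a second, related defect of your sketch: under EA, insertions and evictions are independent exponential clocks and the cache occupancy fluctuates around $k$ (CTA constrains only its expectation), so transitions are not swaps $\sS\mapsto\sS\cup\{x\}\setminus\{y\}$ --- states of all sizes occur, and it is the $-\gamma|\sS|$ term in $\phi$ that controls them, something your Lyapunov function $\expC$ alone cannot do. In short, your steps one and two reproduce the paper's setup, and your step three is the correct intuition behind the dominant-transition analysis, but as written it is not a proof; making it one requires exactly the perturbation/resistance machinery the paper invokes.
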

The proof is in Appendix~\ref{a:qlru}.

The paper~\cite{pandey09} proposes two policies for similarity caching: \rndlru{} and \simlru. 
In \rndlru{} a request produces a miss with a probability that depends on the distance from the best approximating object $z$. 
If it does not produce a miss, it refreshes the timer of $z$. Interestingly, \rndlru{} can \mbox{emulate} in part
\qlrud, by using $q C_a(x, \sS_t)/C_r$ as its miss probability. 
The only difference is the refresh probability:
in \rndlru{} the best approximating content $z$ is refreshed with probability $1-q C_a(x, \sS_t)/C_r$ (instead of $(C_r - C_a(x,\sS_t))/C_r$ as in \qlrud).
Our simulations in Sect.~\ref{s:simulations} confirm that, for  the same value of $q$, \rndlru{} and \qlrud{} exhibit very similar performance.
Given our result in Theorem~\ref{t:qlru_opt}, it is not surprising that \rndlru{}  performs better than \simlru~\cite[Fig.~9]{pandey09}.
  
\begin{rem}
It is possible to consider  admission probabilities of the form  $q_{x,t}= a(x,\sS_t) q$, i.e, probabilities that depend on the requested content and the actual state of the cache. Theorem~\ref{t:qlru_opt} still holds, 
when $q$ converges to $0$. This flexibility can be exploited to obtain better performance, by avoiding inserting contents that look less promising.
\end{rem}

As we will show in Sect.~\ref{s:simulations}, \qlrud{} approaches the minimum cost only for very small values of $q$. 
This is undesirable when contents' popularities change rapidly. To obtain a more responsive cache behavior, we propose a novel 
online \mbox{$\lambda$-unaware} policy, that we call \duel.

Similarly to \greedy, upon a request at time $t$ for a content $y'$ which is not in the cache, 
\duel{} estimates the potential advantage of replacing a cached content $y$ with $y'$, i.e.,~to move from the current  state $\sS_t$ to state $\sS'=\sS_t \setminus \{y\} \cup \{y'\}$. As popularities are unknown, it is not possible to evaluate instantaneously  the two costs $\expC(\sS_t)$ and $\expC(\sS'_t)$. Then, the two contents engage in a `duel', i.e., they are compared during a certain 
amount of time  (during this time we need to store only a reference to $y'$).
When a duel  between a real content $y$ and its virtual challenger $y'$ starts, we initialize to zero
a counter for each of them. If $y$  (resp.~$y'$) is the best approximating object for a following request $r_{\tilde t}$ occurring  at time $\tilde t>t$, then the corresponding counter is incremented by $C(r_{\tilde t},\sS_{\tilde t} \setminus\{y\})- C_a(r_{\tilde t},y)$ (resp.~$C(r_{\tilde t},\sS_{\tilde t} \setminus\{y\}   )- C_a(r_{\tilde t},y')$).
The counter associated to  a  dueling content accumulates then the aggregate cost savings due to that content.
A duel finishes in one of two possible ways: 1)~counters get separated by more than a fixed quantity
$\delta$ (a tunable parameter), or 2)~a maximum delay $\tau$ (another parameter) has elapsed since the start of the duel.
Duellist $y'$ replaces $y$ if and only if its counter exceeds the counter of $y$  by more than $\delta$ within the duel duration $\tau$. 
Otherwise $y'$ is evicted, and $y$ becomes available again for a new duel.

A requested content is matched, whenever possible, with a content in the cache that is not engaged in an ongoing duel. At a given time, then, there can be up to $k$ ongoing duels. 
A challenger $y'$ is matched to a stored object $y$ in two possible ways: with probability $\beta$, it is matched to the closest 
object
 in the cache; with the complementary probability $1-\beta$, it is matched to a content selected uniformly at random. Duels between nearby contents allow for fine adjustments of the current cache configuration, 
while duels between far contents enable fast macroscopic changes in 
the density of stored objects.  
Moreover, we avoid running concurrent duellists $y'$ and $y''$ which are too close to each other, because
the counter of a duellist should not be perturbed by the possible insertion of a close-by duellist.  To avoid such \lq interferring duels',
we do not admit a new duellist $y'$ if its counter could be fed by a request that is already feeding the counter of 
another duellist~$y''$.

In essence, \duel{} provides a distributed, stochastic version of \greedy{} with delayed decisions (due to lack of knowledge of $\lambda_x$).

\subsection{Performance bound in the continuous scenario}
\label{s:bounds}
Besides being appropriate to describe objects/queries
in some applications, the continuous scenario is particularly interesting, because it marks a striking difference with exact caching.\footnote{Recall that
in the continuous case the rate of exact hits is null.}
For this scenario, we can derive some exact bounds and approximations of the minimum cost, 
exploiting simple geometric considerations. 


We start considering a homogeneous request process where $\lambda_x = \lambda$ over a bounded set $\sX$. In what follows, all integrals are Lebesgue ones and all sets are Lebesgue measurable.
Given a set $\sA \subset \mathbb R^p$, let $|\sA|$ denote its volume (its measure), and $\sB(x,|\sA|)$ the ball with the same volume centered {in~$x$.\footnote{The geometric shape
 of a ball depends on the considered norm. For example, in $\mathbb R^2$, if $\lVert . \rVert$ is the usual norm-2, balls are circles; if it is the norm-1, balls are squares.}}

\begin{lem}\label{lemma-tiles}
For any $y \in \mathcal X$ and a set  $\sA \subset \mathbb R^p$ it holds:
\begin{equation}
\label{e:tiles}
\int_{\mathcal A}C(x,y)\diff x \ge \int_{\mathcal B(y,|\sA|)} C(x,y) \diff x.
\end{equation}
\end{lem}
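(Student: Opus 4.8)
The plan is to recognize this as an instance of the \emph{bathtub principle} (a rearrangement inequality): the integrand $C(x,y)=\min(h(\lVert x-y\rVert),C_r)$ is a non-decreasing function of the distance $\lVert x-y\rVert$, so among all sets of a fixed volume the one minimizing $\int_{\mathcal A}C(x,y)\diff x$ is precisely the ball centered at $y$, because the ball collects exactly the points where the integrand is smallest.

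To make this precise, I would write $\mathcal B = \mathcal B(y,|\mathcal A|)$ and let $\rho$ denote its radius, so that $\mathcal B=\{x:\lVert x-y\rVert\le\rho\}$ and $|\mathcal B|=|\mathcal A|$ (such a $\rho$ exists and is unique since the volume of a ball is a continuous, strictly increasing, unbounded function of its radius for any fixed norm). Splitting both domains along the common part $\mathcal A\cap\mathcal B$ gives
\[
\int_{\mathcal A}C(x,y)\diff x - \int_{\mathcal B}C(x,y)\diff x
= \int_{\mathcal A\setminus\mathcal B}C(x,y)\diff x - \int_{\mathcal B\setminus\mathcal A}C(x,y)\diff x.
\]
Since $|\mathcal A|=|\mathcal B|<\infty$, cancelling the shared finite volume $|\mathcal A\cap\mathcal B|$ yields $|\mathcal A\setminus\mathcal B|=|\mathcal B\setminus\mathcal A|$.

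The key estimate compares the integrand on the two symmetric-difference pieces against the threshold value $M\triangleq\min(h(\rho),C_r)$ attained on the boundary of the ball. On $\mathcal B\setminus\mathcal A$ every point satisfies $\lVert x-y\rVert\le\rho$, so monotonicity of $h$ gives $C(x,y)\le M$; on $\mathcal A\setminus\mathcal B$ every point satisfies $\lVert x-y\rVert\ge\rho$, hence $C(x,y)\ge M$. Combining these bounds with the equality of volumes,
\[
\int_{\mathcal A\setminus\mathcal B}C(x,y)\diff x \;\ge\; M\,|\mathcal A\setminus\mathcal B| \;=\; M\,|\mathcal B\setminus\mathcal A| \;\ge\; \int_{\mathcal B\setminus\mathcal A}C(x,y)\diff x,
\]
which shows the displayed difference is non-negative and proves the claim.

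I do not expect a genuine obstacle here; the only points needing a little care are that the sphere $\{\lVert x-y\rVert=\rho\}$ has Lebesgue measure zero (so it is immaterial whether the ball is open or closed, and the strict-versus-nonstrict distinctions on the two pieces are harmless), and the finiteness of $|\mathcal A|$ needed to cancel $|\mathcal A\cap\mathcal B|$, which is guaranteed since we work over a bounded domain $\sX$.
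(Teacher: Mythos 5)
Your proof is correct, and it is precisely the formalization of the argument the paper sketches: the paper states only the intuition that a ball centered at $y$ minimizes the integral because $C(x,y)$ is non-decreasing in $\lVert x-y\rVert$, and explicitly omits the proof. Your symmetric-difference decomposition with the boundary threshold $M=\min(h(\rho),C_r)$ is the standard (bathtub) way to make that intuition rigorous, and the measure-theoretic caveats you flag (finiteness of $|\mathcal A|$, null boundary) are handled appropriately.
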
	
The lemma provides the intuitive result that, among all sets $\sA$ with a given volume, the approximation cost for requests falling in $\sA$ is minimized when $\sA$ is a ball centered in $y$, 
since $C(x,y)=\min(C_a(x,y),C_r)=\min(h(\lVert x-y\rVert),C_r)$ is a non-decreasing function of the distance between $x$ and $y$. We omit the simple proof.

Observe that the integral on the right hand size of~\eqref{e:tiles} does not depend on $y$ but only on the volume $|\sA|$, {\color{black} because $C(x,y)$ depends only on the distance $\lVert x- y \rVert$}. We then write  $F(|\sA|) \triangleq \int_{\mathcal B(y,|\sA|)} C(x,y) \diff x$. We are now able to express the following bound for the expected cost:
\begin{thm}
\label{t:lb}
In the continuous scenario with constant request rate $\lambda$ over $\sX$, for any cache state $\sS$,
\begin{equation}
\label{e:thm_lb}
\expC(\sS) \ge \lambda k F\left(\frac{|\sX|}{k}\right).
\end{equation}
\end{thm}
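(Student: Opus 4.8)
The plan is to reduce the $k$-term minimum defining $C(x,\sS)$ to a sum of single-center contributions and then apply Lemma~\ref{lemma-tiles} center by center. Since $\lambda_x \equiv \lambda$, we have $\expC(\sS) = \lambda \int_{\sX} C(x,\sS)\diff x$, and for $\sS = \{y_1,\dots,y_k\}$ the service cost factorizes as $C(x,\sS) = \min(C_a(x,\sS),C_r) = \min_{i} C(x,y_i)$. First I would partition $\sX$ into measurable Voronoi-type cells $\sA_1,\dots,\sA_k$, where $\sA_i$ collects the points $x$ for which $y_i$ is the best (lowest-index, to break ties) approximator, so that $C(x,\sS) = C(x,y_i)$ on $\sA_i$ and $\sum_{i} |\sA_i| = |\sX|$. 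This splits the integral as $\int_{\sX} C(x,\sS)\diff x = \sum_{i=1}^k \int_{\sA_i} C(x,y_i)\diff x$.

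Next I would apply Lemma~\ref{lemma-tiles} to each cell, replacing $\sA_i$ by the ball $\sB(y_i,|\sA_i|)$ of the same volume centered at $y_i$, which can only decrease the integral: $\int_{\sA_i} C(x,y_i)\diff x \ge F(|\sA_i|)$. Collecting the terms gives $\expC(\sS) \ge \lambda \sum_{i=1}^k F(|\sA_i|)$. To finish, I would invoke convexity of $F$ together with Jensen's inequality under the volume constraint $\sum_{i} |\sA_i| = |\sX|$:
\[\sum_{i=1}^k F(|\sA_i|) \ge k\, F\!\left(\frac{1}{k}\sum_{i=1}^k |\sA_i|\right) = k\, F\!\left(\frac{|\sX|}{k}\right),\]
which yields exactly~\eqref{e:thm_lb}.

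The key step---and the only one needing real work---is the convexity of $F$. Here I would parametrize the ball by its radius: writing $V(r)$ for the volume of a ball of radius $r$ (a strictly increasing function of $r$, with inverse $r(V)$) and integrating $C(x,y) = \min(h(\norm{x-y}),C_r)$ in spherical shells, the fundamental theorem of calculus gives $F'(V) = \min(h(r(V)),C_r)$. Since $r(V)$ is increasing in $V$ and $h$ is non-decreasing, $F'$ is non-decreasing, hence $F$ is convex, as required. A minor technical point to check along the way is the measurability of the cells $\sA_i$ and the fact that Lemma~\ref{lemma-tiles} is here applied with balls that may poke outside $\sX$, which is harmless since $C(x,y)$ and $F$ are defined through the distance on all of $\R^p$.
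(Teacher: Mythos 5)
Your proof follows exactly the paper's own argument: a Voronoi partition of $\sX$ induced by $\sS$, Lemma~\ref{lemma-tiles} applied cell by cell to pass to balls $\sB(y_i,|\sA_i|)$, and then Jensen's inequality using the convexity of $F$ under the constraint $\sum_i |\sA_i| = |\sX|$. The only minor difference is in how convexity of $F$ is established --- you use monotonicity of the derivative $F'(V)=\min\bigl(h(r(V)),C_r\bigr)$, while the paper's Appendix~\ref{a:convexity} argues directly from the definition of convexity (which sidesteps differentiability issues when $h$ has jumps) --- but both arguments are sound.
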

\begin{proof}
Given $\sS=\{y_1, y_2, \dots, y_k\}$, we denote by $\sA_h$ the set of objects in $\sX$ having $y_h$ as closest object in the cache, i.e., $\sA_h = \{x \in \sX \mid y_h \in \argmin d(x,\sS) \}$.  The family $\{\sA_1, \sA_2, \dots, \sA_k\}$ is a Voronoi tessellation of $\sX$.
We have:
 \begin{align*}
\expC(\sS) &=  \lambda   \int_{ \mathcal \sX} C(x,\sS) \diff x = \lambda  \sum_{h=1}^k \int_{ \mathcal A_h } C(x,y_h)\diff x\\
& \ge  \lambda  \sum_{h=1}^k \int_{\mathcal B(y_h, |\sA_h|) } C(x,y_h) \mathrm{d}x = \lambda \sum_{h=1}^k F\left(|\sA_h |\right) \\
& \ge \lambda k F\left(\frac{\sum_{h=1}^k |\sA_h|}{k} \right) = \lambda k F\left(\frac{|\sX|}{k}\right),
\end{align*} 
where the first inequality follows from Lemma~\ref{lemma-tiles}, and the second one from Jensen's inequality, since $F(\cdot)$ is a convex function \textcolor{black}{(see Appendix~\ref{a:convexity})}.
\end{proof}
In some cases, it is possible to show that specific cache configurations achieve the lower bound in~\eqref{e:thm_lb} and then are optimal:
\begin{cor}
\label{cor1}
Let $\bar d$ be the distance for which the approximation cost is equal to the retrieval cost, i.e.,~$\bar d=\inf \{d: h(d)=C_r\}$. Let $\sB_{\bar d}(x)$ be a ball of radius $\bar d$ centered in $x$. Any cache state $\sS=\{y_1, \dots y_k\}$, such that the  balls $\sB_{\bar d}(y_h)$ are contained in $\sX$ and have intersections with null volume, is optimal.
\end{cor}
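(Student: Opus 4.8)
The plan is to show that any configuration $\sS$ satisfying the hypotheses attains the lower bound of Theorem~\ref{t:lb}, which by that theorem forces $\sS$ to be a global minimizer of $\expC$. The whole argument is a direct, exact computation of $\expC(\sS)$ that exploits the geometry of the balls $\sB_{\bar d}(y_h)$; no inequality other than the one already proved in Theorem~\ref{t:lb} is needed.

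First I would record two elementary consequences of $\bar d = \inf\{d : h(d) = C_r\}$ together with the monotonicity of $h$. For $\lVert x - y \rVert < \bar d$ one has $h(\lVert x - y \rVert) < C_r$, hence $C(x,y) = h(\lVert x - y \rVert)$; for $\lVert x - y \rVert > \bar d$ one has $h(\lVert x - y \rVert) \ge C_r$, hence $C(x,y) = C_r$. The sphere $\lVert x - y \rVert = \bar d$ has null volume and is irrelevant to the integrals that follow.

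Next I would partition $\sX$ (up to null sets) into the $k$ balls $\sB_{\bar d}(y_h)$ and the remainder $R = \sX \setminus \bigcup_h \sB_{\bar d}(y_h)$, which is legitimate because the balls lie inside $\sX$ and intersect in null volume. On each ball I claim $C(x,\sS) = h(\lVert x - y_h \rVert)$ for a.e.\ $x \in \sB_{\bar d}(y_h)$: indeed $C(x,y_h) = h(\lVert x - y_h \rVert) < C_r$, whereas for every other center the disjointness of the balls gives $\lVert x - y_{h'} \rVert \ge \bar d$ and hence $C(x,y_{h'}) = C_r$, so $y_h$ is a best approximator. On $R$ every center lies at distance at least $\bar d$, so $C(x,\sS) = C_r$. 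Writing $V_{\bar d} = |\sB_{\bar d}(y_h)|$ (the same for all $h$ by translation invariance) and using $\int_{\sB_{\bar d}(y_h)} h(\lVert x - y_h \rVert)\diff x = \int_{\sB_{\bar d}(y)} h(\lVert x - y \rVert)\diff x$, this yields
\[
\expC(\sS) = \lambda\left[\, k \int_{\sB_{\bar d}(y)} h(\lVert x - y \rVert)\diff x + C_r\bigl(|\sX| - k V_{\bar d}\bigr)\right].
\]

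Finally I would match this with $\lambda k F(|\sX|/k)$. Since the $k$ balls of volume $V_{\bar d}$ fit disjointly inside $\sX$, we have $k V_{\bar d} \le |\sX|$, so the ball $\sB(y,|\sX|/k)$ has radius at least $\bar d$; splitting its defining integral at radius $\bar d$ and applying the same two-regime expression for $C(x,y)$ gives $F(|\sX|/k) = \int_{\sB_{\bar d}(y)} h(\lVert x - y\rVert)\diff x + C_r\bigl(|\sX|/k - V_{\bar d}\bigr)$. Multiplying by $\lambda k$ reproduces exactly the displayed expression, so $\expC(\sS) = \lambda k F(|\sX|/k)$, the lower bound of Theorem~\ref{t:lb}, proving optimality. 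The only step requiring genuine care is the per-ball identification $C(x,\sS) = h(\lVert x - y_h \rVert)$, where the null-intersection hypothesis is precisely what guarantees that no competing center can beat $y_h$ inside $\sB_{\bar d}(y_h)$, and the containment hypothesis is what supplies $k V_{\bar d} \le |\sX|$; everything else is bookkeeping.
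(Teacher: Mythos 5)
Your proof is correct and follows exactly the route the paper intends: the text preceding Corollary~\ref{cor1} says such configurations are optimal because they \emph{achieve the lower bound of Theorem~\ref{t:lb}}, and your computation verifies precisely that $\expC(\sS)=\lambda k F(|\sX|/k)$. You have simply filled in the bookkeeping the paper leaves implicit (the two-regime form of $C(x,y)$ around $\bar d$, the a.e.\ identification of the best approximator inside each ball via the null-intersection hypothesis, and the splitting of $F(|\sX|/k)$ at radius $\bar d$ using $kV_{\bar d}\le|\sX|$), all of which is sound.
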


\begin{cor}\label{cor2}
Any cache state $\sS=\{y_1, \dots, y_k\}$, such that, for some $d$, the balls $\sB_{d}(y_h)$ for $h=1,\dots, k$ are a tessellation of $\sX$ (i.e.,~$\cup_h \sB_d(y_h)=\sX$ and $|\sB_d(y_i)\cap \sB_d(y_j)|=0$ for each $i$ and $j$),  is optimal.
\end{cor}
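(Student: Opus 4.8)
The plan is to show that the lower bound of Theorem~\ref{t:lb} is attained with equality, so that optimality follows immediately from~\eqref{e:opt_bound} once we establish $\expC(\sS)=\lambda k F(|\sX|/k)$. Concretely, I would revisit the chain of (in)equalities in the proof of Theorem~\ref{t:lb} and argue that, under the tessellation hypothesis, \emph{both} inequalities there---the one coming from Lemma~\ref{lemma-tiles} and the one coming from Jensen---collapse to equalities.

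The geometric heart of the argument is to identify the Voronoi cells $\sA_h = \{x \in \sX : y_h \in \argmin_j \|x - y_j\|\}$ with the balls $\sB_d(y_h)$, up to null volume. First I would note that all the balls share the common radius $d$ and hence a common volume $V_d = |\sB_d(y_h)|$; since they cover $\sX$ with pairwise intersections of null volume, $k V_d = |\sX|$, i.e.\ $V_d = |\sX|/k$. Next, for any center $y_j$ with $j \neq h$, the set of points $x \in \sB_d(y_h)$ with $\|x - y_j\| < d$ is contained in $\sB_d(y_h) \cap \sB_d(y_j)$, which has null volume; hence for a.e.\ $x \in \sB_d(y_h)$ one has $\|x - y_j\| \ge d \ge \|x - y_h\|$ for every $j$. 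Thus $y_h$ is a closest center a.e.\ on $\sB_d(y_h)$, giving $\sB_d(y_h) \subseteq \sA_h$ up to null sets. Since $\{\sB_d(y_h)\}$ and $\{\sA_h\}$ both partition $\sX$ (a.e.) and $\sum_h |\sB_d(y_h)| = |\sX| = \sum_h |\sA_h|$ with $|\sB_d(y_h)| \le |\sA_h|$, each inclusion must be an equality of volumes, so $\sA_h = \sB_d(y_h)$ a.e.

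With this identification the computation is short. On each cell $y_h$ is the best approximator, so $C(x,\sS) = C(x,y_h)$ there; because $\sA_h$ \emph{is} the centered ball $\sB(y_h, |\sA_h|)$, Lemma~\ref{lemma-tiles} holds with equality, giving $\int_{\sA_h} C(x,y_h)\diff x = F(|\sA_h|) = F(V_d)$; and because all cells share the common volume $V_d = |\sX|/k$, the Jensen step of Theorem~\ref{t:lb} is also an equality. Summing over $h$ yields $\expC(\sS) = \lambda \sum_h F(V_d) = \lambda k F(|\sX|/k)$, matching~\eqref{e:thm_lb}, so $\sS$ is optimal.

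I expect the only real obstacle to be the measure-theoretic bookkeeping hidden in the phrase \emph{up to null volume}. The boundary spheres $\{\|x - y_h\| = d\}$ and the pairwise ball intersections are null sets and therefore irrelevant to the integrals, but one must state this cleanly to justify both the inclusion $\sB_d(y_h) \subseteq \sA_h$ and the passage from ``$\sA_h = \sB_d(y_h)$ a.e.'' to equality of the integrals. Everything else---the volume accounting $k V_d = |\sX|$ and the equality conditions in Lemma~\ref{lemma-tiles} and Jensen---is routine once this identification is in place.
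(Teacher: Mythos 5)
Your proof is correct and follows precisely the route the paper intends for Corollary~\ref{cor2}: the paper states it as an immediate consequence of Theorem~\ref{t:lb} (``specific cache configurations achieve the lower bound in~\eqref{e:thm_lb} and then are optimal''), and you establish exactly that, by showing the Lemma~\ref{lemma-tiles} step and the Jensen step both collapse to equalities for a tessellating configuration, so that $\expC(\sS)=\lambda k F\left(|\sX|/k\right)$. The measure-theoretic bookkeeping you flag (null spheres and null pairwise intersections, giving $C(x,\sS)=C(x,y_h)$ a.e.\ on $\sB_d(y_h)$) is precisely what the paper leaves implicit, and your handling of it is sound.
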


If the request rate is not space-homogeneous, one can apply the results above over small regions $\sX_i$ of $\sX$ where $\lambda_x$ can be approximated by a constant value $\lambda_{\sX_i}$, assuming a given number $k_i$ of cache slots is devoted to each area (with the constraint that $\sum_i k_i =k$). In the regime of large cache size $k$, it is possible to determine how $k_i$ should scale with the local request  rate $\lambda_{\sX_i}$, obtaining an approximation of the minimum achievable cost through Theorem~\ref{t:lb}.

For example, consider the case when \textcolor{black}{the domain $\sX$ lies in the plane ($\sX \subset \mathbb R^2$), $C_r = \infty$, and $C_a(x,y)=\lVert x-y\rVert_1^\gamma$. We partition the domain into $M$ disjoint regions of unitary area, on which the 
request rate  can be approximately assumed to be constant and equal to $\lambda_i$, $1 \leq i \leq M$. Let $k_i \geq 0$ be the number of cache slots devoted to region $i$, and $\bm k$ the vector storing $k_i$ values. Then each cache slot is used to approximate requests falling in a diamond of area $1/k_i$ and radius  $r_i = \sqrt{1/(2 k_i)}$, within region $i$ (we ignore border effects, supposing $r_i\ll 1$). The 
approximation cost $c_i$ for each cell belonging to region $i$ is the same, as approximation costs are invariant to translation. Without lack of generality, we can then consider a cell centred in $y=0$. The approximation cost can be easily computed as:
\begin{align*}
c_i(r_i) & = \int_{{\sB}_{r_i}\!(0)}\norm{x - 0}^\gamma \diff x \\
	&= 4 \int_{0}^{r_i}\int_{0}^{r_i-x_1} (x_1+x_2)^\gamma \diff x_2 \diff x_1 = 4 \frac{r_i^{\gamma+2}}{\gamma+2}, 
\end{align*}
Expressing $c(r_i)$ as function of  $k_i$, we obtain:}
$$c_i(k_i) = \zeta k_i^{-\frac{\gamma+2}{2}}, $$
where $\zeta \triangleq 2^{(2-\gamma)/2}/(\gamma+2)$.
Hence the total approximation cost in the whole domain is $\expC(\bm k) = \sum_{i=1}^{M} k_i \lambda_i c_i(k_i)$.
 
We select the values $\bm k$ that minimize the expected cost:
\begin{equation}\label{eq:opt1}
\begin{aligned}
& \underset{k_1, \dots, k_M}{\text{minimize}} & & \zeta \sum_{i=1}^{M} \lambda_i k_i^{-\gamma/2} \\
& \text{subject to} & & \sum_{i=1}^{M} k_i = k
\end{aligned}
\end{equation}

Employing the standard Lagrange method, we obtain that $\lambda_i k_i^{-(\gamma+2)/2}$ equals some unique constant for any region $i$, which means  
that $k_i$ has to be proportional to $\lambda_i^{2/(\gamma+2)}$.
In the limit of large M, we substitute the sums in \equaref{opt1} with continuous integrals, obtaining:     
\begin{equation}\label{eq:approx}
\min \expC({\bm k}) \approx \zeta k^{-\gamma/2} \left(\int_{\sX} \lambda(x)^{\frac{2}{\gamma+2}} \diff x \right)^{\frac{\gamma+2}{2}}.  
\end{equation}

The case when $C_r$ is finite is slightly more complex and it is studied in Appendix~\ref{a:approx}. 

\begin{figure}
\centering
\includegraphics[scale=0.375]{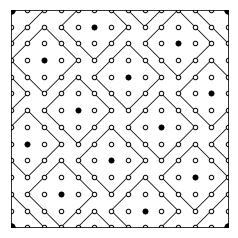}
\caption{Example of perfect tessellation of a square grid with wrap-around conditions, in the case $l = 2$, $L = 13$. Black dots correspond to a minimum cost cache configuration under homogenous popularities.}
\label{f:grid}
\end{figure}

\section{Experiments}\label{s:simulations}
To evaluate the performance of different caching policies,
we have run extensive Monte-Carlo simulations in the following reference scenario:  
a bidimensional $L \times L$ square grid of points, with unitary step and wrap-around conditions,
and $C_a(x,y) = \norm{x-y}_1$, i.e.,~the approximation cost equals the minimum number of hops between $x$ and $y$. 
This is a finite scenario (with catalog size equal to $L^2$) that approximates the continuous scenario
in which $\sX$ is a square. 
We let $k = L = 1+2l(l+1)$, for some positive integer $l$. When $L=1+2l(l+1)$, there exists a regular tessellation of the grid with $L$ balls (squares in this case), each with $L$ points. Figure~\ref{f:grid} provides an example of such regular tessellation in the case  $l = 2$, $L = 13$.  When $k=L$, we can apply (the discrete versions of) Corollary~\ref{cor2} and approximation~\eqref{eq:approx} to compute the minimum cost.


We first consider traffic synthetically generated according to the IRM, in two cases:
{\em homogeneous}, in which all objects are requested with the same rate;
{\em Gaussian}, in which the request rate of object $i$
is proportional to $\exp(-d_i^2/(2 \sigma^2))$, where $d_i$ is the hop distance from the grid center. Under homogeneous traffic, Corollary~\ref{cor2} guarantees that a cache configuration storing the centers of the balls of any tessellation like the one in Fig.~\ref{f:grid} is optimal. The case of homogeneous popularities then tests the ability of similarity caching policies to converge to one of the $L$  optimal configurations (corresponding to translated tessellations). The case of Gaussian popularities, instead, tests their ability to reach a heterogeneous configuration richer of stored objects close to the center of the grid.

We consider the case $l = 12$, $L = 313$, with catalog size slightly less than $10^5$ objects. We set  $C_r = 1000 > 2L= \max_{x,y} C_a(x,y)$, i.e.,~a  setting very far from exact caching, where any request can in principle be approximated by any object.
For a fair comparison, all algorithms start from the same initial state, corresponding to 
a set of (distinct) objects drawn uniformly at random from the catalog.
For the \duel{} policy, we experimentally found that, in the general case of grids with unitary step, 
a good and robust way to set its various parameters
is $\beta = 3/4$, $\delta = f \cdot \min_{x\neq y} C_a(x,y)$, $\tau = f L/\lambda$, which requires to choose a single {parameter~$f$.}

\begin{figure}
\centering
\includegraphics[scale=0.38]{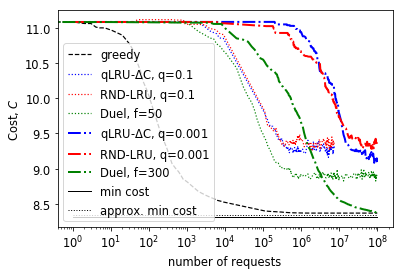}
\caption{Performance of different policies in the case of {\em homogeneous} traffic.}
\label{f:uniform}
\end{figure}

\begin{figure}
\centering
\includegraphics[scale=0.38]{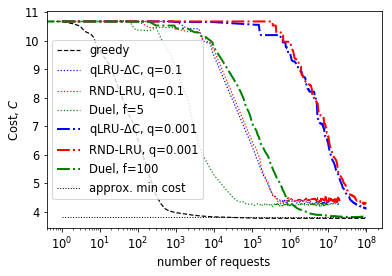}
\caption{Performance of different policies in the case of {\em Gaussian} traffic, with $\sigma = L/8$.}
\label{f:gaussian}
\end{figure}

Figures \ref{f:uniform} and~\ref{f:gaussian} show 
the instantaneous cost \eqref{eq:conditionalcost} achieved by different policies 
 as function of the number of arrived requests, respectively for homogeneous and Gaussian traffic (with $\sigma = L/8$). The optimal cost (approximated by~\eqref{eq:approx}, and also exactly computed thanks to Corollary~\ref{cor2} in the homogeneous case)  is also reported as reference.
In both cases, as expected, \greedy{} outperforms all $\lambda$-unaware policies and reaches an almost optimal cache configuration after a number of  arrivals of the order of the catalog size.
For \qlrud, \rndlru, and  \duel, we show two curves for different settings of their parameter (either $q$ or $f$),  leading to a faster convergence to more costly states (thin dotted curves), or a slower one to less costly states
(thick dash-dotted curves). As we mentioned in Sect.~\ref{s:lambda_unaware}, \qlrud{} and \rndlru{} are close (provided that we match their miss probability), and indeed they 
exhibit very similar performance, with a slight advantage of~\qlrud{} for small values of $q$ (remember that the local optimality in Theorem~\ref{t:qlru_opt} holds for vanishing $q$).
\duel{} achieves the best accuracy-responsiveness trade-off, i.e., for a given quality (cost) of the final configuration, it achieves it faster than the other $\lambda$-aware policies.
Figure \ref{f:final} shows the cache configuration achieved by \duel{} after $10^8$ arrivals, for both 
types of traffic. 

\begin{figure}
\centering
\begin{minipage}{.23\textwidth}
  \centering
  \includegraphics[scale=0.16]{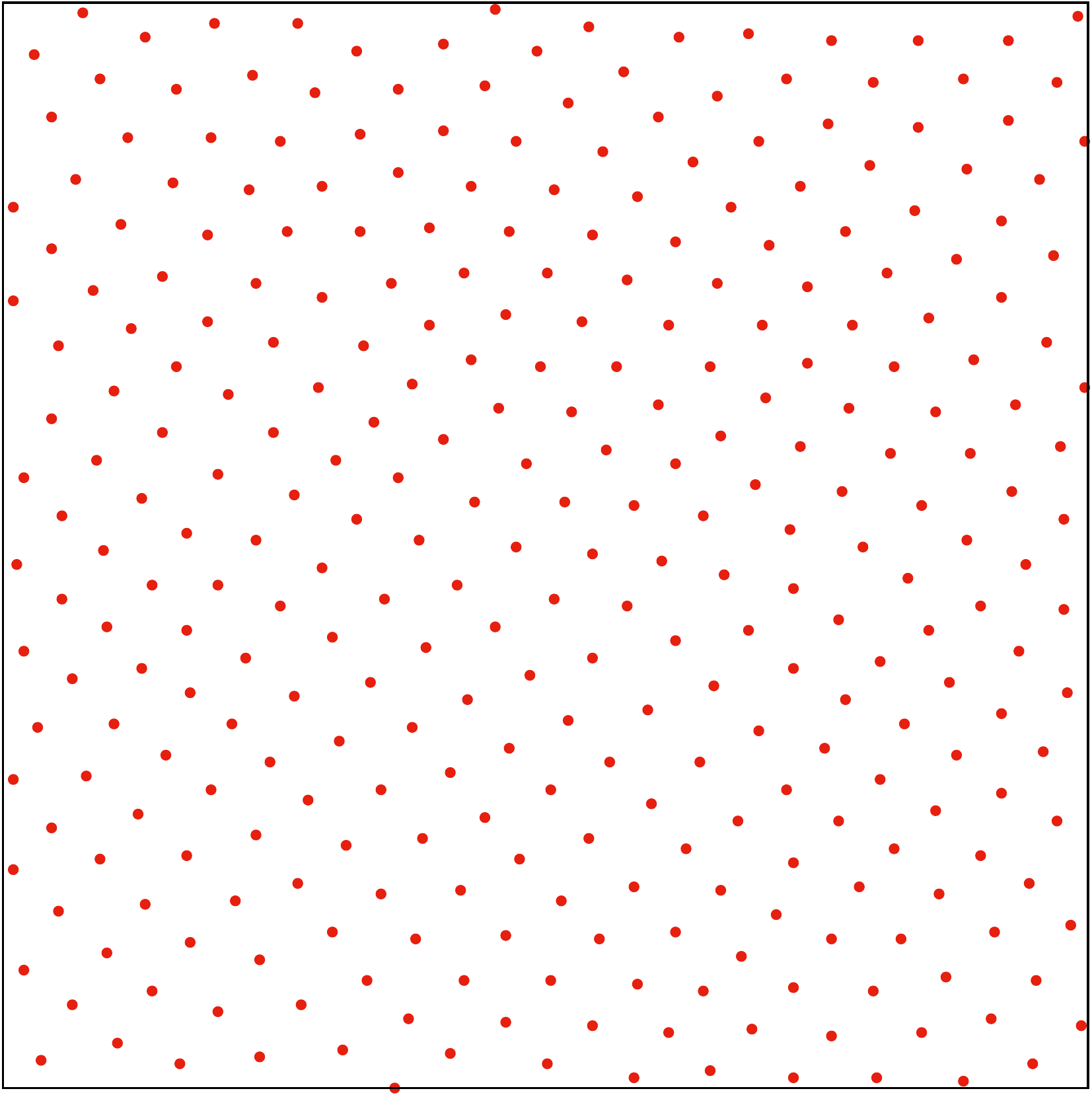}
\end{minipage}%
\begin{minipage}{.23\textwidth}
  \centering
  \includegraphics[scale=0.16]{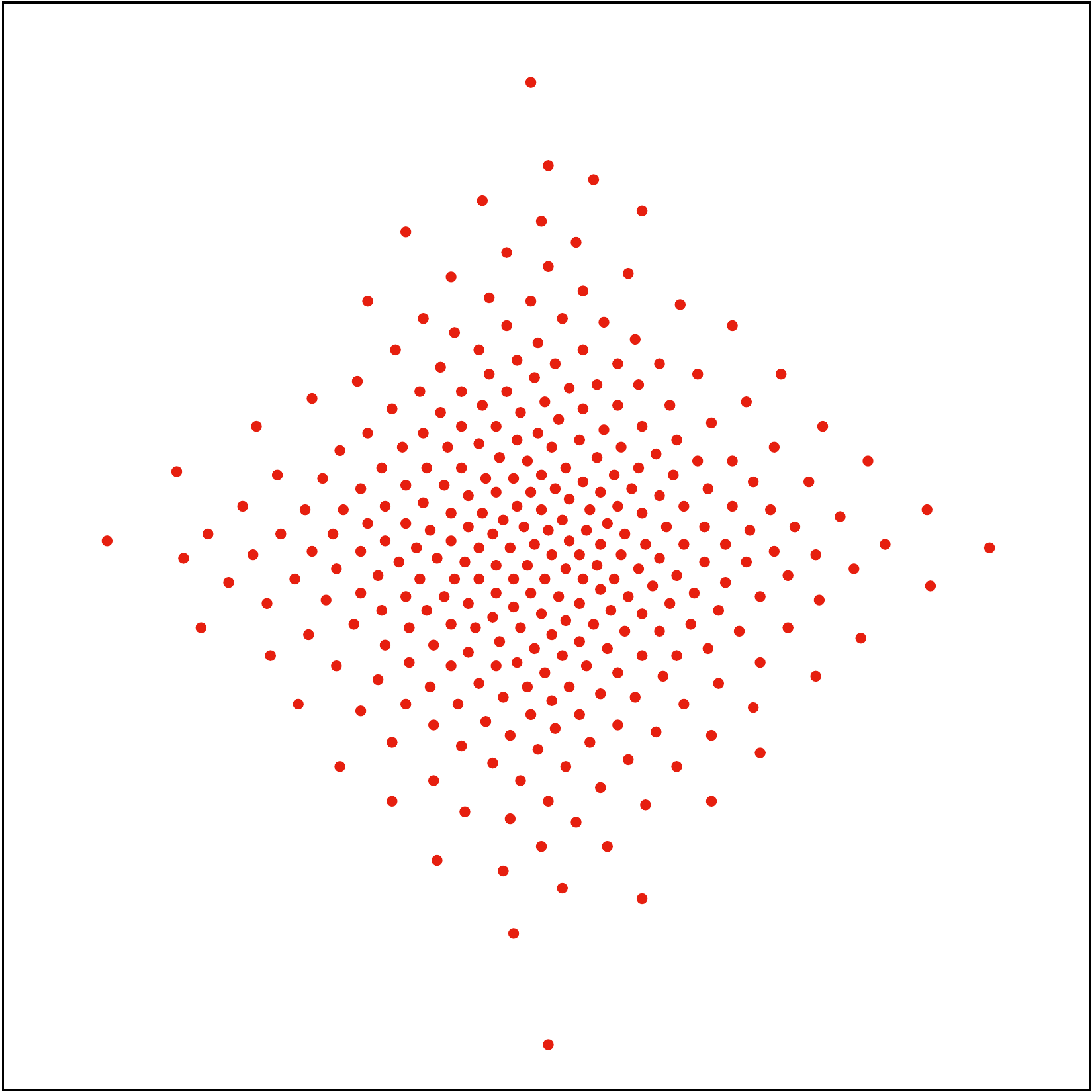}
\end{minipage}
\caption{Final configuration produced by the \duel{} policy under
{\em homogeneous} traffic (left plot, with $f=300$), and {\em Gaussian} traffic (right plot, with $f = 100$).} \label{f:final}
\end{figure}


We have also evaluated the performance of the different policies using a real content request trace collected over 5 days from Akamai content delivery network.
The trace contains  roughly 418 million requests for about 13 million objects (more details about the trace are in~\cite{neglia17tompecs}). 
By discarding the 116 least popular  objects (all requested only once) from the original trace,  we obtain a slightly reduced catalog that can be mapped to a square grid with $L = 3643$. We tested two extremely different ways to carry out the mapping.   
In the {\em uniform} mapping, trace objects are mapped to the grid points according to a random permutation: popularities of close objects on the grid are, then,  uncorrelated.
In the {\em spiral} mapping, trace objects are ordered from the most popular 
to the least popular, and then mapped to the grid points along an expanding spiral starting from the center: popularities  of 
close-by objects are now strongly correlated, similarly to what happens under synthetic {\em Gaussian} popularities.  

Figure \ref{f:trace} shows the accumulated cost achieved by different policies 
as  function of the number of arrived requests, for both mappings.
For the \qlrud{} and \duel{} policies, we performed a rough optimization of 
their (unique) parameter (i.e., either $q$ or $f$), limiting ourselves to optimize the first significant digit.\footnote{This means that, 
for example, \qlrud{} with $q = 0.2$ achieves, at the end of trace, an accumulated cost 
smaller than that achieved using either $q = 0.1$ or $q = 0.3$.}   

To better appreciate the possible gains achievable by similarity caching, we have also added
the curves produced by a cache whose state evolves according to two exact caching policies: \lru{} and \random~\cite{garetto16}.
Since these policies produce a disproportionate number of misses,
their total aggregate cost~\eqref{e:avg_cost} is at least one order of magnitude larger than that~\qlrud{} and~\duel. For a fair comparison, we only plot the aggregate approximation cost $\sum_t C_a(r_t,\sS_t)$. Although retrieval costs incurred are ignored, \lru{} and \random{} still perform between 30\% and 50\% worse than~\duel.

The figure also shows the performance of  \greedy{}, using as $\lambda_x$ the empirical popularity distribution measured on the entire trace. 
Interestingly, under non stationary, realistic traffic conditions, \greedy{} no longer outperforms $\lambda$-unaware policies.
In particular, \duel{} takes the lead under both mappings, due to its ability to dynamically adapt to shifts in contents' popularity.  
{\color{black}To quantify the extent of popularity variation throughout the trace, we 
separately computed the popularity distributions within the first and the second half of the trace,
and used Kendall's tau measure,\footnote{\textcolor{black}{We adopt the \lq\lq tau-b" version of this measure as defined in \cite{tau1945}, 
which accounts for ties. Values close to 1 indicate strong agreement, values close to -1 indicate strong disagreement.}} 
as implemented in the \textsf{scipy} library \cite{scipytau},
to compare the objects' rankings in the first and second half of the trace, 
obtaining the value $0.079$, which is an indication of strong popularity variation along the trace.}


\begin{figure}
\centering
\begin{minipage}{.23\textwidth}
  \centering
  \includegraphics[scale=0.385]{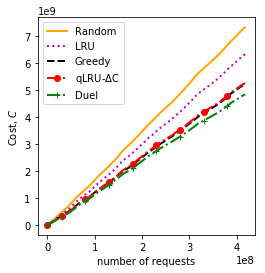}
\end{minipage}%
\begin{minipage}{.23\textwidth}
  \centering
  \includegraphics[scale=0.385]{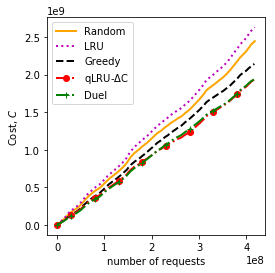}
\end{minipage}
\caption{Performance of different policies under Akamai trace: 
{\em uniform} mapping (left plot) and {\em spiral} mapping (right plot).} \label{f:trace}
\end{figure}

{\color{black} At last, we tested our policies on an Amazon trace.  
In \cite{McAuley_amazon} the authors provide an image dataset
for millions of items sold by Amazon, and use a neural network to embed such images into a $d$-dimensional ``style'' space. 
\emph{``The notion of style is learned by training the model on pairs of objects which Amazon considers to be related''}~\cite{McAuley_amazon} as multiple users purchased or viewed both objects. Style dissimilarity is captured  by the Euclidean distance and close objects in the style space can be recommended to the customers.
When a user is interested in a product, Amazon can check over the whole database which other products have a similar style. A similarity cache can be used to speed-up such search, similarly to what proposed for contextual advertising systems in \cite{pandey09}. 
We consider $d=100$ and build a request trace from the timestamped reviews left by users for the 10000
most popular items belonging to the  category ``baby.''
The resulting trace, containing about 740K requests, is fed into a
cache of size $100$, and  Euclidean distance is used as approximation cost (because it captures items' dissimilarity).
Figure \ref{f:trace} shows the accumulated cost achieved by different policies 
as function of the number of arrived requests. 
We observe that \duel{} performs almost the same as \greedy, which suggests that the considered
trace does not contain significant popularity variations over time.
Actually, Kendall's tau measure of objects' ranking agreement between the two halves of the trace
is 0.452, significantly larger than the value obtained for the Akamai trace.}

\begin{figure}
\centering
\includegraphics[scale=0.38]{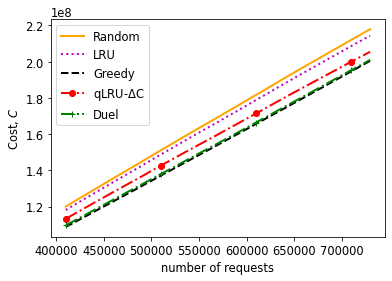}
\caption{\textcolor{black}{Performance of different policies under Amazon trace.}}
\label{f:amazon}
\end{figure}

\section{Model extensions}
Consider a request $r_t$ that generates the retrieval of object~$o$ from the server with consequent cost $C_r$.
Until now, we have assumed that 1) the retrieved object can only be the requested one ($o=r_t$), 2) the cache is not obliged to store the retrieved object $o$ after serving it to the user, and 3) in any case the cache serves to the user a content $o'$ such that $C_a(r_t,o')\le C_r$ (this is actually a consequence of 1)~and 2)~as we discussed in~Sect.~\ref{s:model}). 

But it may also be meaningful to distinguish a cost paid by the user ($C_r^u$) and a cost paid by the network ($C_r^n$). The first one may capture the additional delay the user experiences to retrieve the content, the second one the additional network traffic or server load.
We may want to serve a content $o'$ to the user only if $C_a(r_t,o') \le C_r^u$. 
Moreover, we may allow the cache to retrieve an approximation of $r_t$. Finally, many caching policies require the retrieved object to be stored locally.

We have then $8$ different possible cases, depending if
\begin{itemize}
	\item the cache is required to retrieve $r_t$, or not,
	\item the cache is required to store $o$, or not,
	\item the cache is allowed to serve a content $o'$ to the user only if $C_a(r_t,o') \le C_r^u$, or not.
\end{itemize}

Interestingly, all these cases can be captured by our model. In particular, $C_a(x,y)$ can be defined to be infinite any time the approximation cost exceeds $C_r^u$. Moreover, we can capture the choice about storing or not the retrieved object $o$ through the constant
\[\chi \triangleq 
	\begin{cases}
		+\infty & \text{if the cache is required to store $o$,}\\
		C_r^u + C_r^n & \text{otherwise.}
	\end{cases}	
\]
The system can always be modeled as an MTS where the total cost can be expressed as sum of movement costs ($C_m(\sS_t,\sS_{t+1})$) to change from one state to another and service costs ($C(r_t,\sS_{t+1})$).  The movement cost has the same expression as in~\eqref{e:movement_cost} with $C_r$ replaced by $C_r^u+C_r^n$. 
 The service cost needs now to be defined as 
\begin{equation}
\label{e:service_cost2}
C(r,\sS) \triangleq \min(C_a(r,\sS),\chi).
\end{equation}

Results for the offline and stochastic scenarios, respectively in Sect.~\ref{s:offline} and~Sect.~\ref{s:stochastic}, hold also for these model variants. The adversarial scenario requires more discussion. Our system is not a $k$-server with excursions if the cache can retrieve an object different from the requested one (this would correspond to a server both moving to a new state and performing an excursion). In the other cases, excursion costs can be defined as
\[C_e(x,y) \triangleq \min(C_a(x,y),\chi).\]
Theorems~\ref{t:small_catalogue} and~\ref{t:uniform_space} hold if we replace $C_r$ by $C_r^u+C_r^n$ and add the additional requirement that $\alpha_u\le 2$ and $\alpha \le 2$, respectively.

%

\section{Conclusion and future work}
The analysis  provided in this paper  constitutes a first step toward the  understanding of similarity caching,
however it is far from being exhaustive.
In the offline dynamic setting, it is unknown if and under which conditions  there exists an efficient polynomial clairvoyant policy corresponding to B\'el\'ady's one~\cite{belady66} for exact caching. The adversarial setting calls for more general results for the $k$-server problem with excursions. For exact caching, the characteristic time approximation is rigorously justified under an opportune scaling of cache and catalogue sizes~\cite{fagin77,fricker12,jiang18}. It would be interesting to understand if and to what extent analogue results hold for similarity caching. Moreover, is it possible to use the CTA to compute the expected cost of a similarity caching policy, similarly to what can be  done for the miss ratio of \lru, \qlru, \random, and other policies in the classic setting?
There is surely still room for the design of efficient $\lambda$-unaware policies. Interestingly, in our experiments the smallest cost is achieved by \duel, a novel policy 
that completely differs 
from exact caching policies, suggesting that similarity caching may require to depart from traditional approaches. Another interesting direction would be to consider networks of similarity caches.
At last, the above issues should be specified in the context of the different application domains mentioned  in the introduction, ranging from 
multimedia retrieval to recommender systems, from sequence learning tasks to low-latency serving of machine learning predictions. 
The design of computationally efficient algorithms can, indeed, strongly depend on the specific application context.
In conclusion, our initial theoretical study and performance evaluation of similarity caching
opens many interesting directions and brings new challenges into the caching arena.

%

\bibliographystyle{unsrt}
\bibliography{optimal}

\newpage
\onecolumn
\begin{center}
\vspace{0.4 cm}
\end{center}
\appendices 
{\color{black}
\section{Proof of \eqref{e:opt_bound}} \label{a:mincost-pol}
To ensure   that the LHS in \eqref{e:opt_bound} is  bounded, we assume that 
 $\hat{C}=\sup_{x,\mathcal{S}\neq \emptyset}C(x,\mathcal{S})<\infty$, i.e. the cost paid to serve  any request from any state is finite.

Under such condition, we  prove that:
\begin{align*}
\liminf_{T \to \infty} \mathcal C_A(\sS_1,\vr_T)  \ge \min_{\sS} \expC(\sS), \textrm{ w.p.} 1. 
\end{align*}

First, observe that 
\begin{align*}
 C_A(\sS_1,\vr_T)  = \frac{1}{T}\sum_{t=1}^T C_m(\sS_t, \sS_{t+1}) + C(r_t,\sS_{t+1}) \ge  \frac{1}{T}\sum_{t=1}^T C(r_t,\sS_t),
\end{align*}
as $C(r_t,\sS_{t+1}) =C_a(r_t, \sS_t)$, if $\sS_{t+1} = \sS_t$, or $C_m(\sS_t, \sS_{t+1}) = C_r$ otherwise. 
Then it is sufficient to prove that 
\begin{align*}
\liminf_{T \to \infty} \frac{1}{T}\sum_{t=1}^T C(r_t,\sS_t)  \ge \min_{\sS} \expC(\sS), \textrm{ w.p.} 1. 
\end{align*}

We  define $Y_t=C(r_t,\sS_{t})-\expC(\sS_{t})$ and observe that by construction 
$\mathbb{E}[C(r_t,\sS_{t})]=\expC(\sS_{t})$, therefore $\mathbb{E}[Y_t]=0$.
Now let $M_t=\sum_{i=0}^t Y_i$ with $M_0=0$; $M_t$ is a martingale, since  i) $\mathbb{E}[M_t\mid {\cal{F}}_{t-1}]=
M_{t-1} + \mathbb{E}[Y_t]=M_{t-1} $,  where $\{{\cal F}_t\}_{t\in \mathbb{N}}$ is the associated natural filtration 
$\mathcal{F}_n=\sigma(\{Y_i\}_{i\le n})$;
ii) $\mathbb{E}[|M_t|]< t\hat{C} <\infty$,  given that by construction  $|M_{t+1}-M_t|=|Y_t|\le \hat{C}$.

Let $S_t=\frac{1}{t}M_t$ (with $S_0=0$), by applying Azuma inequality
 to $M_t$, we have that, for any $\epsilon>0$:
\[
   \mathbb{P}(|S_t|\ge \varepsilon)= \mathbb{P}(|M_t|\ge \varepsilon t)=\mathbb{P}(|M_t-M_0|\ge \varepsilon t)
   \le 2\exp\left(-\frac{\varepsilon^2t}{2\hat{C}^2}\right).
\]
 Therefore by a standard application of the Borel-Cantelli lemma we can claim that:
 \[
 \lim_{t\to \infty} S_t=0 \quad \text{w.p. 1}.
 \]
In other words. we have that:
\begin{align*}
&\liminf_{T \to \infty} \frac{1}{T}\sum_{t=1}^T C(r_t,\sS_{t}) - \min_{\sS} \expC(\sS)\ge 
 \liminf_{T \to \infty} \frac{1}{T}\sum_{t=1}^T\left[ C(r_t,\sS_{t})- \expC(\sS_{t})\right]=
\liminf_{T\to \infty} S_T = \lim_{T\to \infty} S_T =0, \quad \text{w.p. 1},
\end{align*}
from which the assertion follows immediately.
}

\section{Proof of Theorem~\ref{t:greedy}}
\label{a:greedy}
\begin{proof}
We start with the finite case.
Let $(t_n)$ be the sequence of time instants at which contents are requested, and $(\sS_n)=(\sS_{t_n})$  be the corresponding sequence of cache configurations.
The sequence $(\expC(\sS_n))$ is non increasing  and then convergent. Therefore, there exists a finite random variable $\expC_\infty$ such that  $\lim_{n\to \infty} \expC(\sS_n)=  \expC_\infty$ w.p.~1. Moreover, as the set of possible cache configurations (and then the set of possible costs) is finite, the limit is necessarily reached within a finite number of requests. 
Also the sequence $(\sS_n)$ converges after a finite number of requests w.p.~1 to a random configuration $\sS_\infty$, such that $\expC(\sS_\infty)=\expC_\infty$.
Observe, indeed, that no configuration can be visited more than once by construction.
 We prove that $\sS_\infty$ is locally optimal w.p.~1. 
 Consider the event $E$ composed of  all path trajectories of $(\sS_n)$ converging to a specific configuration $\sS'$ that is not locally optimal. By definition, there exists a significant object $y_c \notin \sS'$, whose insertion in the cache strictly reduces the cost  of $\sS'$.  
By construction, in $E$, $y_c$ can be requested  
only  before the convergence of $(\sS_n)$ to $\sS'$.
Then,  necessarily $\mathbb P(E)=0$, because  $\lambda_{y_c}>0$ and, therefore,  w.p.~$1$ sample-paths contain an unbounded  sequence of  time-instants 
at which  requests for $y_c$  arrive.


The proof for the continuous case is significantly more technical and complex. 	Let $\mathcal{Y}_n=\mathcal{Y}(t_n)$  be the sequence of caching configurations observed under policy \greedy{} (i.e., $\{t_n\}_n$ is the sequence of random instants at which the contents in the cache are modified), and 
$C(\mathcal{Y}_n) $.
the corresponding average costs. We set conventionally  $\mathcal{Y}_{n+1}=\mathcal{Y}_{n}$  whenever $t_{n+1}=\infty$. 

Note that  $\mathcal{Y}_n=\mathcal{Y}(t_n)$ is a stochastic sequence of set valued random variables  (i.e $\mathcal{Y}_n\in \mathcal{X}^k$,  where $\mathcal{X}^k$ is 
endowed with with the metric $d(\mathcal{Y},\mathcal{Y}')  = \min_\pi \sum_i d(y_i,y_{\pi(i)}) $,  where $\pi$ denotes a generic permutation of order $k$),
with associated law $\mathbb{P}_n(\diff \mathcal{Y} )$.
Note that, by construction, the sequence  of associated costs satisfies 
$\mathbb {E} [ C(\mathcal{Y}_n)\mid \mathcal{Y}_{n-1} ] \le   C(\mathcal{Y}_{n-1} )$ i.e.  
sequence $  C(\mathcal{Y}_{n})$  it is a  non-negative uniformly-bounded super-martingale.  Therefore  
$\lim_{n\to \infty} C(\mathcal{Y}_n)=  C_\infty$   exists w.p.1  by Doob Convergence Theorem. Moreover since $C(\mathcal{Y}_n) $ is uniformly bounded   $\mathbb{E}[|C(\mathcal{Y}_n)-  C_\infty|]\to 0$ as well. Note that not necessarily $\{\mathcal{Y}_n\}_n$ converges point-wise: however, observe that, since the domain of configurations 
$\mathcal{X}^k$  is compact,  by Prokhorov's theorem, there exists a sub-sequence $(\mathcal{Y}_{n_i} )$ that converges weakly (i.e. in distribution)
to a random variable $\mathcal{Y}_\infty$. In addition  $\mathbb{E} [C_\infty]= \lim  \mathbb{E}[ C({\mathcal{Y}_n})]=\lim \mathbb{E}[C({\mathcal{Y}_{n_i}})]
=\mathbb{E}[	C(\mathcal{Y}_\infty)]$, due to the continuity and  boundness of $C()$ with respect to its argument. 


Let $n_\epsilon<\infty$ be the  minimum  (discrete)  time,  such that  $\mathbb{E}[C(\mathcal{Y}_{n_\epsilon})]<  \mathbb{E}[C(\mathcal{Y}_\infty)]+\epsilon$  for an arbitrary $\epsilon>0$. 
We denote with $\mathcal{P}_{n_\epsilon}(\diff \mathcal{Y})$ the  law of  $\mathcal{Y}_{n_\epsilon}$.

Now we define as $\mathcal{Z}_{a,p}$  the open set of configurations $\mathcal Y$ such that:    
\[
B_a(\mathcal{Y}):= \{\mathcal{Y}'=\mathcal{Y}\cup y^c\setminus y^e\;:    C(\mathcal{Y})>  C( \mathcal{Y'})+ a, \mathcal{Y}'\in \mathcal{S}  \}\neq \emptyset 
\]
and 
\[
p_a(\mathcal{Y})=\frac{\int_{B_a( \mathcal{Y}) }\lambda(x) \diff x }{\int_{\mathcal X} \lambda(x) \diff x } > p.  
\]
Note that $Z_{a,p}$, provided that $C_a()$ and $\lambda()$ are continuous,  is by definition an open set.
Now, note that any configuration $\mathcal{Y}$ which is not local optimal, necessarily lies in some  $\mathcal{Z}_{a,p}$. In particular any  configuration $\mathcal{Y}$ which is not local optimal must belong to  $\mathcal{Z}_{1/\sqrt{m},{1/\sqrt{m}}}$ for sufficiently large $m$. 
Therefore  the set of non local optimal configurations lies in $\cup_{m\ge 1}\mathcal{Z}_{1/\sqrt{m},{1/\sqrt{m}}}$.

Now by construction, we have:  
\begin{align*}
\mathbb{E}[ \mathbb {E} [ C (\mathcal{Y}_{\infty})\mid \mathcal{Y}_{n_\epsilon}]] \le \mathbb{E}[ \mathbb {E} [ C (\mathcal{Y}_{n_\epsilon+1})\mid \mathcal{Y}_{n_\epsilon}]]
\le \mathbb{E}[C(\mathcal{Y}_{n_\epsilon}) ]- 
ap \mathbb{P}(\mathcal{Y}_{n_\epsilon}\in  \mathcal{Z}_{a,p})
\end{align*}
but this is in contradiction with the fact that by construction $\mathbb{E}_{\mathbb{P}_M}[C(\mathcal{Y}_{n_\epsilon})]   \le \mathbb{E}[C(\mathcal{Y}_\infty) \mid \mathcal{Y}_{n_\epsilon}]+ \epsilon$,
unless $ ap\mathbb{P}(\mathcal{Y}_{n_\epsilon}\in  \mathcal{Z}_{a,p})  \le \epsilon$.  The assertion follows from the arbitrariness of $\epsilon$.
Indeed for any fixed $m\ge 1$ choosing  $a=1/\sqrt{m}$, $p=1/\sqrt{m}$ and  $\epsilon = 1/{(mk)}$ with  $k\ge 1$  sufficiently large we obtain that: 
\begin{align*}
\limsup_{n\to \infty} \mathbb{P}(\mathcal{Y}_n\in  \mathcal{Z}_{\frac{1}{\sqrt{m}},\frac{1}{\sqrt{m}}} ) 
   = \limsup_{n\to \infty} \mathbb{P}(\mathcal{Y}_n\in  \cup_{m'\le m} \mathcal{Z}_{\frac{1}{\sqrt{m}'},\frac{1}{\sqrt{m}'}})  	\ge  \limsup_{i\to \infty} \mathbb{P}(\mathcal{Y}_{n_i}\in  \cup_{m'\le m} \mathcal{Z}_{\frac{1}{\sqrt{m}'},\frac{1}{\sqrt{m}'}}) 
  =	0
\end{align*}
Finally  we have:
\begin{align*}
0 =\limsup_{i\to \infty} \mathbb{P}(\mathcal{Y}_{n_i}\in  \cup_{m'\le m} \mathcal{Z}_{\frac{1}{\sqrt{m}'},\frac{1}{\sqrt{m}'}}) 
 \ge  \liminf_{i\to \infty} \mathbb{P}(\mathcal{Y}_{n_i}\in  \cup_{m'\le m} \mathcal{Z}_{\frac{1}{\sqrt{m}'},\frac{1}{\sqrt{m}'}}) 
\ge \mathbb{P}(\mathcal{Y}_{\infty}\in  \cup_{m'\le m} \mathcal{Z}_{\frac{1}{\sqrt{m}'},\frac{1}{\sqrt{m}'}}).
\end{align*} 
where the last inequality  descends from the Portmanteau's  Theorem
(recall that set	$\cup_{m'\le m} \mathcal{Z}_{\frac{1}{\sqrt{m}'},\frac{1}{\sqrt{m}'}}$ is open). 
Therefore, by continuity of probability with respect to increasing events, we have $\mathbb{P}( {\mathcal{Y}}_\infty \in  \cup_{m'} \mathcal{Z}_{\frac{1}{\sqrt{m}'},\frac{1}{\sqrt{m}'}} ) = 0$, i.e.  w.p.1, $ \mathcal{Y}_\infty$ is local optimal.   The assertion holds since, as already observed, 
$\mathbb{E}[C_\infty]=\mathbb{E}[C(\mathcal{Y}_\infty)]$. 

Observe that previous result can be strengthened  under the additional assumption that  system configuration $\mathcal{Y}_n $ converges 
w.p.1 to a variable $ \mathcal{Y}_\infty$ (as for the case in which $|\mathcal{X}|$ is finite or denumerable). In such a case we can prove w.p.1 convergence to an optimal local configuration.
\end{proof}

\section{Proof of Theorem~\ref{t:osa}}
\label{a:osa}
\begin{proof}
If the content to be evicted were selected uniformly at random from the cache, then the proof would be the same as the one of Proposition~IV.2 in~\cite{neglia18ton}. A key point in that proof is that, when the temperature is constant ($T(t)=T$), the homogeneous Markov chains induced by \sa{}  are reversible, so that one can easily write their stationary probability distributions. Here, it is not the case, but we can use the more general result for \emph{weakly-reversible} time-variant Markov chains in~\cite{hajek88}. 

In simulated annealing, it is usual to consider the transition probability from state $\sS_t=\sS$ to state $\sS'$ as combination of two different random choices. First, $\sS'$ is selected as a potential candidate with probability $R(\sS, \sS')$. In our case, non null $R(\sS, \sS')$ are associated only to transitions between states $\sS$ and 
$\sS'=\sS \setminus \{y\} \cup \{x\}$. Furthermore, $\sS'$ is selected as candidate only if   $x$ has been requested and $y$ has been selected for eviction, i.e.,  $R(\sS, \sS')=\lambda_x (p(\sS))_y$. Second, this transition is accepted with a probability that depends on the corresponding costs of the two states, i.e., $\min\left(1, \exp((\expC(\sS)-\expC(\sS'))/T(t))\right)$.

In~\cite{hajek88} weak-reversibility is a structural property of the MC, which is determined by the configuration of  possible state transitions $\sS \to \sS'$ (i.e., those that have $R(\sS, \sS')>0$) and by the behavior of  state function  $\expC(\sS)$, which deterministically determines
the acceptance probabilities.   
We say that state $\sU$ is reachable from state $\sS\neq\sU$ at height $E$, if there exists a sequence of states $\sS= \sS_1, \sS_2, \cdots, \sS_n=\sU$, such that all transitions $\sS_i \to \sS_{i+1}$ have positive probability  $R(\sS_i, \sS_{i+1})$  and $\expC(\sS_i)\le E$ for each~$i$.
The MC is weakly-reversible if, for every value $E$, state $\sU$ is reachable at height $E$ from state $\sS$ if and only if state $\sS$ is reachable at height $E$ from state $\sU$.
The policy $\sa{}$ defines a weakly-reversible MC. Observe, indeed, that   $R(\sS', \sS)>0$  if and only if  $R(\sS, \sS')>0$ .
Moreover note that the MC is irreducible. In particular, any state $\sU$ is reachable from any state $\sS$ at height at most $\expC(\sS) + k \Delta \expC_{\max}$. In fact, at most $k$ transitions are required to move from $\sS$ to $\sU$, corresponding to a request for each object that is in $\sU$, but not in $\sS$. In particular, any global minimum is reachable from any local minimum $\sS$ at height at most $\expC(\sS) + k \Delta \expC_{\max}$. The thesis then follows from a direct application of~\cite[Thm.~1]{hajek88}.
\end{proof}

\section{Proof of Theorem~\ref{t:qlru_opt}}
\label{a:qlru}
\begin{proof}
We start by extending the CTA to similarity caching.
Given an object $x$, let $T_c^{(x)}\!(\sS)$ denote the time content $x$ stays in the cache until eviction if 1) the cache is in state $\sS$ just after its insertion and 2)  $x$ is never 
refreshed (i.e., moved to the front) during its sojourn in the cache. 
In general $T_c^{(x)}\!(\sS)$ is a random variable, whose distribution depends both on $x$ and on the cache state $\sS$. The basic assumption of CTA is that $T_c^{(x)}\!(\sS)=_d T_c^{(x')}\!(\sS')$ for each $x$, $x'$, $\sS$, and $\sS'$, i.e.,~we can ignore dependencies on the content and on the state. Moreover, for caching policies where contents are maintained in a priority queue ordered by the time of the most recent
refresh,
and where evictions occur from the tail 
(as in \lru, \qlru, and \qlrud), CTA approximates $T_c$ with a constant.

The strong advantage of CTA is that  the interaction among different  contents in the cache is now greatly simplified  as in a TTL-cache~\cite{choungmo12valuetools}. In a TTL-cache, upon insertion, a timer with value $T_c$ is activated. It is restarted upon each new request for the same content. Once the timer expires, the content is removed from the cache. Under CTA, the instantaneous cache occupancy can violate the hard buffer constraint.\footnote{Under CTA the number of contents stored in the cache is a Poisson random variable with expected value equal to $k$. Since its coefficient of variation tends to 0 as $k$ grows large, CTA is expected to be asymptotically accurate. } 
The value of $T_c$ is obtained by imposing  the expected occupancy to be equal to the buffer size, i.e.,
\begin{equation}
\label{e:exp_occupancy}
\sum_{x \in \sX} \pi_x = k,
\end{equation}
where $\pi_x$ is the stationary probability that $x$ is stored in the cache.
For exact caching, it is relatively easy to express $\pi_x$ as function of $T_c$ and, then, to numerically compute the value of $T_c$.
For similarity caching,  additional complexity arises because the timer refresh rate for each content $x$ depends on the other contents in the cache (as $x$ can be used to provide approximate answers), i.e., dynamics of different contents are still coupled. 
Nevertheless, the TTL-cache model allows us to study this complex system as well.


The expected marginal cost reduction due to $x$ in state $\sS$ is 
\begin{equation}
	\Delta \expC_x(\sS) \triangleq \expC(\sS \setminus \{x\}) - \expC(\sS).
\end{equation}
If the state of the cache does not change, the expected sojourn time of content $x$ in the cache can be computed as:
\begin{equation}
\label{e:sojourn}
\EX{T_S} = \frac{e^{\frac{\Delta \expC_x(\sS)}{C_r} T_c}-1}{\frac{\Delta \expC_x(\sS)}{C_r}} \triangleq \frac{1}{\nu_x(\sS)}.
\end{equation}
EA assumes that $\sS_t$ evolves as a Markov Chain (MC) with transition rate from $\sS$ to $\sS \setminus \{x\}$ equal to $\nu_x(\sS)$ from \eqref{e:sojourn}, and from $\sS$ to $\sS \cup \{x\}$ equal to $q \lambda_x C_a(x,\sS)/C_r$ (if $x$ is not already in $\sS$). \cite{leonardi18jsac} shows that EA is very precise in practice for complex systems of interacting caches.

Results for regular perturbations of Markov chains~\cite{young93} allow us to study the asymptotic behavior of the MC $(\sS_t)$ when $q$ vanishes, and in particular to determine which states are \emph{stochastically stable}, i.e.,~have a non-null probability to occur as $q$ converges to~$0$. The proof is analogous to the one in~\cite{neglia19swiss_arxiv}.

Reference~\cite{neglia19swiss_arxiv} proposes a caching policy that can coordinate content allocation across different caches in a dense cellular network to maximize a  performance metric. Despite the different application scenario, there is an analogy between the two problems. Here, two similar/close contents interact because they can satisfy the same requests. In~\cite{neglia19swiss_arxiv}, two copies of the same content at two close base stations interact because they can satisfy requests from users in the transmission range of both base stations.
More formally, the gain from a given allocation of copies of content $f$, $G_f(\x_f)$), corresponds here to the cost reduction $\expC(\emptyset) - \expC(\sS)$. Similarly, the gain from the copy at base station $b$, $\Delta G_f^{(b)}(\x_f)$, corresponds to the gain from content $x$, $\Delta\expC_x(\sS)$. Moreover, transition rates of the MC when adding/removing a content $x$, as function of the current set of contents $\sS$, have the same ``structure'' of the transition rates in~\cite{neglia19swiss_arxiv}.
It follows that many results in~\cite{neglia19swiss_arxiv} have a corresponding result for our problem, that can often be obtained simply replacing the matching quantities. For example, from the rate expressions and the constraint~\eqref{e:exp_occupancy} we can conclude that there exists a sequence $(q_n)$ converging to $0$, such that $T_c(q_n)=C_r/\gamma \log(1/q_n)$~\cite[Sect.~IV.C]{neglia19swiss_arxiv}.


In particular, we can adapt the proof of~\cite[Prop.~IV.1]{neglia19swiss_arxiv} to our problem,  and show that the stochastically stable states are locally optimizers in the sense that it is not possible to replace a content in such states while reducing the cost.
More formally, given two states that differ only for one element, i.e.~$\sS = \sU \cup \{x\}$ and $\sS' = \sU \cup \{x'\}$ with $x, x' \notin \sU$,
we can prove that if $\sS$ is stochastically stable then $\expC(\sS) \le \expC(\sS')$. 

We briefly present the steps of the proofs that need to be adapted.
As for any cache state $\sS=\{y_1, y_2, \dots, y_l\}$ holds $ \expC(\emptyset)- \expC(\sS) = \sum_{i=1}^l \Delta \expC_{y_i}\!\left(\cup_{j=1}^i \{y_j\}\right)$ (corresponding to~\cite[Eq.~(5)]{neglia19swiss_arxiv}), the inequality $\expC(\sS) \le \expC(\sS')$ holds if and only if $\Delta \expC_x(\sS) \ge \Delta \expC_{x'}(\sS')$. We can prove the last inequality showing that $\Delta \expC_x(\sS) \ge \gamma$ and $\Delta \expC_{x'}(\sS') \le \gamma$. The proof exploits an additional state function $\phi(\sS)\triangleq  \expC(\emptyset) - \expC(\sS) - \gamma |\sS|$, that characterizes the dominant transitions~\cite[Sect.~IV.E]{neglia19swiss_arxiv}.
\end{proof}

{\color{black}
\section{Convexity of $F(v)$}
\label{a:convexity}
Let $F(v) = \int_{\sB(y,v)} C(x,y) \diff x$, where $\sB(y,v)$ denotes the ball of volume $v$ with center $y$.
Remember that $C(x,y)$ depends on $x$ and $y$ through the norm-induced distance $d(x,y) = \norm{x-y}$. By the change of variable $z = x-y$, we obtain
\[F(v) = \int_{\sB(0,v)} C(z,0) \diff z.\]
We use the short form $\sB(v)$ for $\sB(0,v)$ and $c(z)$ for $C(z,0)$. Note that $c(z)$ is an increasing function of $\lVert z \rVert$.

We prove convexity of $F(v)$ over $\mathbb R^+$ using the definition. Consider  $v_1, v_2 \in \mathbb R^+$  with $v_1 \le v_2$, $\alpha \in [0,1]$ and $v_\alpha = \alpha v_1 + (1-\alpha) v_2$. We also denote by $c_\alpha = \max_{z \in \sB(v_\alpha)} c(z)$ (that is achieved at the points $z$ at the boundary of the ball  $\sB(v_\alpha)$). We want to prove that $\alpha F(v_1) + (1- \alpha ) F(v_2) \ge F(v_\alpha)$.
\begin{align*}
\alpha F(v_1) + (1-\alpha) F(v_2) - F(v_\alpha) & = \alpha \int_{\sB(v_1)} c(z) \diff z + (1- \alpha) \int_{\sB(v_2)} c(z) \diff z - \int_{\sB(v_\alpha)} c(z) \diff z\\
		& = \alpha \int_{\sB(v_1)} c(z) \diff z + (1- \alpha) \int_{\sB(v_2)} c(z) \diff z - \alpha \int_{\sB(v_\alpha)} c(z) \diff z - (1-\alpha) \int_{\sB(v_\alpha)} c(z) \diff z\\
		& = - \alpha \int_{\sB(v_\alpha) \setminus \sB(v_1)} c(z) \diff z + (1- \alpha) \int_{\sB(v_2)\setminus \sB(v_\alpha)} c(z) \diff z\\
		& \ge - \alpha \int_{\sB(v_\alpha) \setminus \sB(v_1)} c_\alpha \diff z + (1- \alpha) \int_{\sB(v_2)\setminus \sB(v_\alpha)} c_\alpha \diff z\\
		& = (- \alpha (v_\alpha - v_1) + (1- \alpha) (v_2 - v_\alpha)) c_\alpha = 0,
\end{align*}
where the inequality follows from the monotonicity of $c(z)$ (note indeed that  $c(z)\ge c_\alpha$ for $z\in \sB(v_2)\setminus \sB(v_\alpha)$, while $c(z)\le c_\alpha$ for $z\in \sB(v_\alpha)\setminus \sB(v_1)$).  
}

\section{Performance bound in the continuous scenario ($C_r<\infty$)}
\label{a:approx}
When $C_r$ is finite,  the cost $c_i$ within a cell takes two different expression depending on whether $r_i$ is smaller than or greater 
than \mbox{$\bar d =  C_r^{1/\gamma}$}:
\[ c_i(r_i) = \begin{cases}
	4 \frac{r_i^{\gamma+2}}{\gamma+2},  & \textrm{ if } r_i \leq \bar d,\\
	4 \frac{\bar d^{\gamma+2}}{\gamma+2} +  C_r\left(\frac{1}{ k_i} - 2 {\bar d}^2\right),  & \textrm{ if } r_i > \bar d,
\end{cases}
\]
or, as a function of $k_i$:
\[ c_i(k_i) = \begin{cases}
	\zeta \bar k^{-\frac{\gamma+2}{2}} + C_r\left(\frac{1}{ k_i} - \frac{1}{\bar k}\right),  & \textrm{ if } k_i < \bar k,\\
	\zeta k_i^{-\frac{\gamma+2}{2}},  & \textrm{ if } k_i \geq \bar k,
\end{cases}
\]
where $\bar k$ is the value for which diamonds in the region have radius $\bar d$, i.e.,~$\bar k \triangleq 1/(2 \bar d^2)= 1/(2 C_r^{2/\gamma})$. We observe that $c_i(k_i)$ is a decreasing function of $k_i$ (as intuitively expected).

The optimization problem becomes:
\begin{equation}\label{eq:opt1}
\begin{aligned}
& \underset{k_1, \dots, k_M}{\text{minimize}} & & \sum_{i=1}^{M} \lambda_i \left(\zeta  \bar k^{-\frac{\gamma}{2}} + C_r\left(1 - \frac{k_i}{\bar k}\right)\right) \mathbbm{1}_{k_i < \bar k},\\
& & & + \lambda_i  \zeta k_i^{-\frac{\gamma}{2}} \mathbbm{1}_{k_i \geq \bar k} \\
& \text{subject to} & & \sum_{i=1}^{M} k_i = k.
\end{aligned}
\end{equation}
Let us suppose that $\lambda_1\ge \lambda_2 \ge \dots \ge \lambda_M$.
Consider an allocation of cache slots $\bm k$ such that there exist $i$ and $j$ with $i<j$ such that $k_i < \bar k$ and $k_j >0$. 
We can improve such allocation by moving cache slots from region $j$ to region $i$ at least as far as the number of slots for region $i$ does not reach $\bar k$. Formally, one can check that the allocation $\bm k'$ such that $k_i' = k_i + \min(\bar k - k_i, k_j)$, $k_j' = k_j - \min(\bar k - k_i, k_j)$, and $k_l'= k_l$ for $l\notin\{i,j\}$, has a smaller cost than $\bm k$. From this observation it follows that the optimal allocation has the following structure: the most popular $i^*$ regions receive more than $\bar k$ slots, region $i^*+1$ may receive $k_{i^*+1} > 0$ slots (but $k_{i^*+1}  < \bar k$) , and all other regions do not receive any slot.

In the limit for large $M$ we can ignore the effect of the region $i^*+1$ and only consider the regions that are popular enough to receive more than $\bar k$ slots and those that do not receive any cache slot. If the measure of each level set of $\lambda(x)$ is null, a threshold  $\lambda^*$ can be identified, such that the first (resp. second) set of regions corresponds to the part of the Euclidean space where the request rate exceeds (resp.~does not exceed) $\lambda^*$.
We obtain the approximate cost:
\begin{equation}\label{eq:approx2}
\min \expC({\bm k}) \approx \zeta  k^{-\gamma/2} \left(\int_{\sX: \lambda(x) > \lambda^*} 
\lambda(x)^{\frac{2}{\gamma+2}} \diff x \right)^{\frac{\gamma+2}{2}} +  C_r    \left(\int_{\sX: \lambda(x) < \lambda^*} \lambda(x) \diff x \right).
\end{equation}
Under the assumptions above and that the density $\lambda(x)$ is a continuous function, $\lambda^*$ can be determined considering that in the popular space region  \[k(x) = k \frac{\lambda(x)^{\frac{2}{\gamma+2}}}{\int_{\sX: \lambda(x) > \lambda^*}  \lambda(x)^{\frac{2}{\gamma+2}} \diff x}\ge \bar k = \frac{1}{2 C_r^{\frac{2}{\gamma}}}. \]
As $\lambda(x)$ is  continuous, we can select points $x$ whose request rate is arbitrarily close to (but larger than) $\lambda^*$, and then we obtain  the inequality
$$ k {(\lambda^* +\epsilon)}^{\frac{2}{\gamma+2}} \geq  \frac{1}{2 C_r^{2/\gamma}}  \left(\int_{\sX: \lambda(x) > \lambda^*}  \lambda(x)^{\frac{2}{\gamma+2}} \diff x \right).$$
Moreover, given the arbitrariness of $\epsilon$, and  recalling \eqref{eq:approx2},  we identify $\lambda^*$ with 
 the smallest value (the $\inf$) that  satisfies this inequality, which necessarily is   the only solution of 
$$ k {\lambda^*}^{\frac{2}{\gamma+2}} =  \frac{1}{2 C_r^{2/\gamma}}  \left(\int_{\sX: \lambda(x) > \lambda^*}  \lambda(x)^{\frac{2}{\gamma+2}} \diff x \right).$$

\end{document}